\def\capa{{\textsf{cap}}}
\def\tok{{\textsf{w}}}
\def\tot{{\textsf{numtokens}}}
\def\rwLength{{\textsf{rwLength}}}
\def\stepNum{{\textsf{stepNum}}}
\def\outbox{{\textsf{Outbox}}}
\def\core{{C_{t',t}}}
\def\polylog{\operatorname{polylog}}
\newcommand\ackname{Acknowledgements}
  \newenvironment{acknowledgements}{%
      \titlepage
      \null\vfil
      \@beginparpenalty\@lowpenalty
      \begin{center}%
        \bfseries \ackname
        \@endparpenalty\@M
      \end{center}}%
     {\par\vfil\null\endtitlepage}
\begin{document}

\date{}

\title{A Fully-Distributed  Protocol for Constructing Byzantine-Resilient  Peer-to-Peer Networks}
\author[1]{Aayush Gupta}
\author[2]{Gopal Pandurangan}

\affil[1,2]{\small Department of Computer Science, University of Houston, TX, USA\hspace{3cm}
\texttt{agupta56@cougarnet.uh.edu}, 
\texttt{gopal@cs.uh.edu}}

\maketitle

\begin{abstract}

We address a fundamental problem in  Peer-to-Peer (P2P) networks, namely, constructing and maintaining dynamic  P2P overlay network topologies with essential properties such as connectivity, low diameter, and high expansion, that are resilient to continuous high churn and the presence of a large number of malicious (Byzantine) nodes.  Our main goal is to construct and maintain a sparse (bounded degree) {\em expander} topology despite high churn and a large number of Byzantine nodes. Such an expander topology has logarithmic diameter, high expansion, and is robust to churn and the presence of a large number of bad nodes, and facilitates efficient and robust algorithms for fundamental problems in distributed computing, such as agreement,  broadcasting, routing, etc.
Existing protocols construct such expander networks that are tolerant to churn, but {\em do not} work under the presence of a large number of Byzantine nodes, which can behave arbitrarily and maliciously to disrupt the operation of the honest nodes.

We assume a stochastic churn adversary that models the network dynamism (nodes joining and leaving continuously) and a Byzantine full-information adversary (which controls the malicious nodes)  that has complete knowledge of the topology (including the identity of the nodes that join and leave) at any time and has unlimited computational power. It also has complete knowledge about the entire state of the network at every round, including random choices made by all the nodes up to and including the current round (but not future rounds).
  
Our main contribution is a randomized, fully-distributed dynamic P2P protocol that works with only local initial knowledge and guarantees, with a high probability, the maintenance of a {\em constant} degree graph with {\em high expansion} even under {\em continuous} churn and in presence of a large number of Byzantine nodes. Our protocol can tolerate up to $o(n/\polylog(n))$ Byzantine nodes  (where $n$ is the stable network size).  Our protocol is efficient, lightweight, and scalable, and it incurs only $O(\polylog(n))$ overhead for topology maintenance: only polylogarithmic  (in $n$) bits need to be processed and sent by each honest node per round, and any honest node's computation cost per round is also polylogarithmic. 

Our protocol can be used as a building block for solving fundamental distributed computing problems in highly dynamic networks, such as Byzantine agreement and Byzantine leader election, and enables fast and scalable algorithms for these problems.
\end{abstract}

\iffalse

%%
%% The code below is generated by the tool at http://dl.acm.org/ccs.cfm.
%% Please copy and paste the code instead of the example below.

%%
%% Keywords. The author(s) should pick words that accurately describe
%% the work being presented. Separate the keywords with commas.
\keywords{Do, Not, Us, This, Code, Put, the, Correct, Terms, for,
  Your, Paper}

\received{20 February 2007}
\received[revised]{12 March 2009}
\received[accepted]{5 June 2009}
\fi
%%
%% This command processes the author and affiliation and title
%% information and builds the first part of the formatted document.
\maketitle

\section{Introduction}
\label{sec:intro}

Modern  P2P networks, including those that maintain blockchains, are characterized by several features. First,  they are  {\em dynamic}:  not only do they exhibit high {\em churn}, i.e., nodes joining and leaving the network continuously, but also the network topology changes over time.  Second, they are vulnerable to attack: their {\em permissionless} nature allows malicious (\emph{Byzantine}) nodes to join at will and makes them open to attack. Third,  distributed computation of fundamental problems such as agreement, leader election, broadcasting, and routing must be accomplished efficiently despite high dynamism and the presence of adversarial participants in the network. Fourth, they are {\em sparse}, i.e.,  each node has only a small degree and typically has (initial) knowledge of only its local neighbors; this necessitates \emph{fully-distributed} protocols that work with only local knowledge.

Much of the well-established theory of distributed computing (see e.g., \cite{Lyn96,Pel00,AW04}) --- developed over the last five decades --- has focused on {\em static}, well-behaved networks. 
Dynamic networks pose non-trivial challenges when solving fundamental distributed computing problems such as: broadcasting, routing, agreement (consensus), and leader election \cite{Lyn96,Pel00,AW04,BA-survey,Wattenhofer_2019_Book}. 
%These  problems  are critical building blocks in distributed computing and are  widely used (see e.g., \cite{Lyn96,Pel00,AW04,BA-survey}). 
%For example, {\em Byzantine agreement} is a key problem that underlies  blockchain technology \cite{Wattenhofer_2019_Book}.  
Additionally, in dynamic networks where nodes and edges change continuously, even \emph{constructing and maintaining an efficiently functioning topology}
becomes a fundamental problem. In particular, we must design distributed algorithms to ensure that the dynamic network is well-connected and has low degree, low diameter, and high expansion  --- in short, a sparse {\em expander} graph. Accomplishing all the above becomes especially challenging in the presence of Byzantine nodes that may enter and leave the dynamic network at will.

Motivated by the above considerations, several works have taken steps towards designing provably efficient and robust protocols for highly dynamic (with high churn)
peer-to-peer networks.  The work of \cite{APRU12} studies the distributed agreement problem in dynamic P2P networks with churn. 
%The distributed agreement problem in P2P networks is challenging since the goal
%is to guarantee {\em almost-everywhere} agreement, i.e., almost all nodes
%should reach consensus,  even under high churn rate. 
Its main contribution is an efficient and scalable  randomized  fully-distributed
protocol (i.e., each node processes and sends only polylogarithmic messages
per round, and local computation per node is also lightweight) that guarantees  stable almost-everywhere agreement\footnote{In sparse, bounded-degree networks,
an adversary can always isolate some number of non-faulty nodes, hence ``almost-everywhere" is the best one can hope for in such networks \cite{DPPU88}.}  with high
probability  even under very high {\em adversarial} churn rate (up to linear in $n$ {\em per round}, where $n$ is the network size) in a {\em polylogarithmic} number of
rounds. We note that this work does not handle Byzantine nodes. %(Unless otherwise stated, the churn is assumed to be controlled by an adversary that has complete
%knowledge of what nodes join and leave and  at what time, but is oblivious
%to the random choices made by the algorithm.) 
%The algorithm is scalable requires 
%only polylogarithmic  in $n$  bits to be processed and sent (per round) by each node. 
The work of \cite{podc13} presented an efficient fully-distributed protocol
for Byzantine agreement that works despite the presence of Byzantine nodes and high adversarial churn. This algorithm could tolerate up to $O(\sqrt{n}/\polylog(n))$ Byzantine nodes and up to $O(\sqrt{n}/\polylog(n))$ churn per round, took a polylogarithmic number of rounds,  and was scalable. The work of \cite{byzleader} presented an efficient distributed protocol for Byzantine leader election that could tolerate up to $O(n^{1/2 - \epsilon})$ Byzantine nodes (for a small constant $\epsilon > 0$)  and up to $O(\sqrt{n}/\polylog(n)$ churn per round, took a polylogarithmic number of rounds. The work of \cite{spaa13} focused on the problem of storing, maintaining, and searching data in P2P networks.  
%While many P2P systems/protocols have been proposed  for efficient search and storage of data, a major drawback of almost all these  is the lack of algorithms that  work with provable guarantees under a large amount of churn per round.  
%In such a highly dynamic setting, it is non-trivial to even just  store data in a persistent manner; the churn can simply remove a large fraction of nodes in just one time step.  
%On the other hand,
%it is costly to replicate too many copies of a data item  to guarantee persistence. Thus the challenge
%is to use as little storage as possible and maintain the data for a long time, while at the same time designing efficient search algorithms that find
%the data  quickly, despite high churn rate. 
It presented a storage and maintenance protocol that guaranteed with high probability, that data items can be efficiently stored (with only $O(1)$ copies of each data item) and maintained in a dynamic P2P network with churn rate up to ${O}(n/\polylog(n))$ {\em per round}. This protocol, however, cannot handle Byzantine nodes.

  A crucial ingredient that underlies all the above results (\cite{APRU12,podc13,byzleader,spaa13}) is the assumption that though the topology --- both nodes and edges ---  can change arbitrarily from round to round and is controlled by an adversary, the topology in {\em every round} is a (bounded-degree) {\em expander} graph. In other words, the adversary is allowed to control the churn as well as change the topology with the {\em crucial restriction} that it {\em always remains an expander graph in every round}. The assumption of an ever-present underlying expander facilitates the design of highly robust, efficient algorithms (tolerating a large amount of churn and Byzantine nodes). However, this is a very strong assumption, {\em that itself needs to be satisfied if one desires truly distributed protocols that work under little or no assumption on the underlying topology}. This motivates designing a distributed protocol that actually {\em builds} and  {\em maintains} an {\em expander topology under the presence of continuous churn and the presence of a large number of Byzantine nodes.}  Expanders have been used extensively to model dynamic P2P networks in which the expander property is preserved under insertions and deletions of nodes (e.g., see \cite{IPDPS14,LS03,PRU01} and the references therein). However, none of these works guarantee the maintenance of a {\em dynamic} expander network under the more challenging setting of a large number of Byzantine nodes. This is a fundamental ingredient that is needed to enable the applicability of previous results (\cite{APRU12, podc13, spaa13, byzleader}) under a more realistic setting.

We note that there are prior works (e.g., \cite{PRU01,LS03,focs15,Becchetti-ICDCS21}) that gave protocols for building and maintaining expander networks 
under continuous churn. However, these {\em do not} work under the presence of a large number of Byzantine nodes.
%In particular,  the early work of \cite{PRU01} designed a P2P protocol that constructed and maintained expander networks under a stochastic churn adversary that modeled a continuous high churn. The work of LS03 

In this work, we seek to build and maintain a bounded degree {\em expander} P2P network in a dynamic {\em and} Byzantine setting, where nodes join and leave continuously and a large number of nodes can be Byzantine. 
%By ``bounded degree'', we mean that the maximum degree is bounded by some fixed {\em constant}. Our protocol and results can be easily extended to apply if the degree is bounded by a slow-growing function of $n$, say, $\polylog{(n)}$, where $n$ is the number of nodes in the network. %In particular, allowing logarithmic degree makes the designing somewhat easier. For example, the Bitcoin Peer-to-Peer (P2P) network allows eight outgoing connections and up to 125 incoming connections \cite{Mao_2020} 

%\subsection{Related Work}
% Including a brief review of existing methods and how your approach differs or improvements.

\subsection{Model}\label{sec:model}

% 

%We note that {\em real-world P2P networks} are typically {\em sparse and of bounded degree}. In this work, 

%\medskip

%\noindent {\textbf{Expander graphs.}} Expander graphs \cite{Wigderson-exsurvey} have conductance at least a constant (independent of $n$, the size of the network).Expander graphs have been used extensively as candidates to solve Byzantine agreement and related problems in sparse graphs in prior works \cite{DPPU88, KKKSS10, KOM11, KSS06, Upfal94}; the high expansion is crucial in these results, in particular, to tolerate a large number of Byzantine nodes. Expander graphs have also been used extensively to model P2P networks\footnote{In particular, the real-world  Bitcoin P2P network, constructed by allowing each node to choose eight random (outgoing) connections (\cite{Mao_2020}) is likely an expander network if the connections are chosen (reasonably) uniformly at random \cite{Palmer_1985_Book}.} (see e.g., \cite{LS03,  PRU01, gms05, FPJKA07, mahlmann, disc, podc13, 10.1007/978-3-662-48653-5_19,spaa13}).

%\medskip

\noindent \textbf{Dynamic Network Model.}
Following \cite{PRU01} (also see \cite{Becchetti-ICDCS21,JP12}), we assume a stochastic churn model where the arrival of new nodes is modeled by Poisson distribution with rate $\lambda$, and the duration a node stays connected to a network is independently determined by an exponential distribution with mean $1/\mu$.  In this model, it can be shown (cf. Section 4.1 \cite{PRU01}) that the number of nodes in the network converges to $n = \lambda / \mu$, which we call the {\em stable network size}. In the stable state, the expected number of nodes that join {\em and} leave the network is $\lambda$ per unit time; thus $\lambda$ can be called the {\em churn rate} (in the long run).  This model is called the $M/M/\infty$, a well-studied model in queuing theory that has been used  to model
dynamic P2P networks \cite{PRU01,hari-podc,JP12,Becchetti-ICDCS21}.
 The model captures independent arrivals and (memoryless) departures. Furthermore, it can be generalized to arbitrary holding time distributions, e.g., Weibull or lognormal, which have been observed in practice \cite{JP12,SR06}. 
 %While it does not model bursty events, it serves as a reasonable starting point for theoretical guarantees.

Throughout, without loss of generality, we assume an appropriate scaling of the time unit, so that
$\lambda = 1$, and hence $n = 1/\mu$ is the stable network size.  The unit of time can be suitably chosen in the model. In particular, we will assume that a unit of time is the time taken to send a message along an edge of the network, i.e., corresponding to one round in the synchronous communication model (discussed below).\footnote{In practice, typically, this time is very small, say in the order of milliseconds or less. Thus, the number of nodes entering or leaving (the churn rate) over a larger (more realistic) period of time (say, an hour) can be quite large.}

Let $G_t = (V_t, E_t)$ denote the network at time $t$, where $V_t$ represents the set of vertices and $E_t$ represents the set of edges in the network at time $t$. Initially, $G_0$ has zero vertices. Note that the P2P protocol determines the edge set $E_t$. 

Our goal is to design a fully-distributed protocol so that, for any time $t$, $G_t$ is a bounded degree network\footnote{Strictly speaking, we only care about the honest (non-Byzantine) nodes' degree being bounded.}  (degree bounded above by a fixed constant) with
good expansion with high probability.\footnote{We say that an event occurs {\em with high probability} (whp) if its probability is at least $1 - N^{-c}$, where $N$ is the network size, for a sufficiently large constant $c$.} The protocol will achieve this goal by dynamically adding/deleting edges over time.

Like prior works (e.g., \cite{PRU01,focs15,LS03}), we assume an entry mechanism that is needed for new nodes to connect to the network. When a new node enters, it contacts an {\em entry manager} which gives it the addresses of a few (a constant) number of nodes.\footnote{Such a service that provides an entry mechanism is common in real-world P2P networks. For example, in Bitcoin, the client software has  a list of ``seed'' nodes to connect to\cite{bitcoinbook}.} The entry manager chooses a small {\em random} subset of nodes that entered the network in the (not so distant) past. Unlike prior works\cite{Becchetti-ICDCS21,focs15}, the entry manager's role is quite minimal and used just for new incoming nodes to join the network. Crucially, the entry manager {\em does not} have any knowledge of  (or contact with) nodes once they enter the network; also, once a node enters the network, it does not communicate with the entry manager again.

\medskip

\noindent \textbf{Local Knowledge.} 
An important assumption in sparse networks is that nodes at the beginning have only \emph{local} knowledge, i.e., they have knowledge of only themselves and their neighbors in $G$. In particular, they do \emph{not} know the global topology or the identities of other nodes (except those of their neighbors) in the network. Our goal is to design {\em fully-distributed} protocols where nodes start with only local knowledge. However, as common in distributed computing literature, we assume that nodes
know an estimate of the stable network size $n$ (a constant factor upper bound of  $\log n$ will suffice).

\medskip

\noindent \textbf{Full Information Model and Byzantine Adversary.} We assume the powerful {\em full-information} model (e.g., see \cite{Ben-Or_2006,BL89,linial-full-info}) that has been studied extensively.  In this model,  the Byzantine nodes (controlled by an adversary) can behave arbitrarily and maliciously and have knowledge about the entire state of the network at every round, including random choices made by all the nodes up to and including the current round (this is also called {\em rushing} adversary), have unlimited computational power, and may collude among themselves (hence, cryptographic techniques are not applicable in this setting).  
We assume that the Byzantine adversary can choose to corrupt any node when it joins the network. Note that the adversary can see the whole topology and the new node's location in the network (including the neighbors of the new node) and decide whether to corrupt the node. If the adversary didn't choose to corrupt, then
the node remains uncorrupted (called {\em honest} or {\em good}) henceforth. Once a node is corrupted, it remains corrupted till it leaves the network.  Let $B_t$  be the set of
Byzantine nodes at time $t$, and we assume that $|B_t| = o(|V_t|/\log |V_t|)$, where $|V_t|$ is the network size (number of nodes) at time $t$. In particular,  once a stable network size is reached, $|B| = |B_n| =  o(n/\log n)$, where $n = \lambda / \mu$.
The total number of Byzantine nodes cannot exceed the above bound on $|B|$ at any time.

We note that the holding times for Byzantine nodes can be arbitrary (they need not follow the exponential distribution and can stay in the network as long as they want) as long as their total number is bounded (up to the above limit of $|B_t|$) at any time.  However, we assume that Byzantine nodes follow the Poisson arrival distribution like all nodes (which makes them harder to detect). 

\medskip

% \hl{
% \noindent \textbf{Expander Graphs.} 

% }

%In particular, we assume that each node in the network is corrupted by the adversary to be Byzantine with probability $\epsilon > 0$, a fixed parameter.
%Ideally, we would like to have  $\epsilon$ as large as possible. In this work,
%we will allow $\epsilon = o(1/\log n)$, where $n$ is the (stable) network size, i.e.,
%at any point in time, we will allow up to $o(n/\log n)$ Byzantine nodes in the network.

%For reasons that we mention later (cf. Section \ref{sec:}), this is the largest number
%of Byzantine nodes that our protocol can tolerate.

\medskip

\noindent \textbf{Communication Model.} Communication is {\em synchronous} and occurs via  message passing, i.e., communication proceeds in {\em discrete rounds}
by exchanging messages on {\em the edges} of the network, i.e., each node (including Byzantine nodes) can exchange messages
 only with its neighbors in the network.  In particular, a message sent by a node to its neighbor will be delivered
 in one round (which is our unit of time, as defined in stochastic churn model).
 
By our protocol design, honest nodes will only send $O(\polylog n)$ bits per edge per round. Note that Byzantine
nodes don't have any such limit and can send as many bits as they want. (Our protocol is designed in such a way
that crucially handles this extra power given to Byzantine nodes without limiting in any way the bandwidth capacity
of the edges.)
 As is standard in Byzantine algorithms (see, e.g., Lamport et al. \cite{PSL80}), we assume that the receiver of a message across an edge in $G$ knows the identity of the sender, i.e., if $u$ sends a message to $v$ across edge $(u,v)$, then $v$ knows the identity of $u$; also the message sent across an edge is delivered correctly and in order. 

We assume that the communication links are \emph{reconfigurable}: if a node $u$ knows about the ID of some node $v$, then $u$ can establish or drop a link to $v$.\footnote{Strictly speaking, it takes a successful handshake between $u$ and $v$ to establish or drop a bidirectional link.
For simplicity, and since it does not change the asymptotic bounds of our results, we assume that these connections happen instantaneously.} Also, as is standard in  P2P (and overlay) networks, a node can establish a connection (directly) with another node if it knows the ID (e.g., IP address)  of the other node.

\subsection{Our Contributions} \label{sec:result}

Our main contribution is a randomized, {\em fully-distributed} P2P construction protocol that, with high probability, builds and maintains a {\em constant} degree graph with {\em high expansion} even under a {\em continuous} large stochastic churn  and the presence of a large number of Byzantine nodes which operate under the powerful full-information model. Our protocol can tolerate up to $o(n/\polylog(n))$ Byzantine nodes  (where $n$ is the stable network size).  Our protocol is efficient, lightweight, and scalable, and it incurs only $O(\polylog(n))$ overhead for topology maintenance: only polylogarithmic  (in $n$) bits need to be processed and sent by each honest node per round, and any honest node's computation cost per round is also polylogarithmic. Our main result is stated  in Theorem 
\ref{thm:main}.

Expander graphs are crucial in tolerating a high amount of churn
and a large number of Byzantine nodes. As discussed in Section \ref{sec:intro}, several prior works gave efficient fully-distributed algorithms for various problems in P2P networks such as Byzantine agreement\cite{podc13}, leader election \cite{byzleader}, storage and search \cite{spaa13}, and efficient routing and Distributed Hash Table (DHT) construction\cite{Augustine-SPAA22} under the {\em crucial assumption of an ever-present 
underlying expander topology} despite churn and the presence of a large number of Byzantine nodes.
Our P2P construction protocol actually constructs and maintains an expander in a dynamic setting under the presence of a large number of Byzantine nodes. Hence, it enables the applicability of the prior P2P protocols
{\em without the assumption} of an ever-present expander topology.\footnote{However, while these prior results work even under adversarial churn, our protocol works under a stochastic churn model.}

Indeed, our protocol can be used as a building block for solving fundamental distributed computing problems such as Byzantine agreement, Byzantine leader election, and Byzantine search and storage in dynamic P2P networks, and enables fast and scalable algorithms for these problems. In particular,  Byzantine agreement and leader election can be efficiently solved in dynamic P2P networks by a straightforward adaptation of prior protocols \cite{podc13, byzleader} to run on top of our P2P construction protocol that maintains an expander. This leads to  Byzantine agreement and leader election algorithms with the same respective guarantees as in \cite{podc13} and \cite{byzleader}. Specifically, we obtain fast Byzantine agreement and leader algorithms running in $\polylog{n}$ rounds in the stochastic churn model that tolerates $O(\sqrt{n}/\polylog{n})$ Byzantine nodes at any time.\footnote{Note that this bound on the number of Byzantine nodes is well within
the Byzantine tolerance of $o(n/\polylog{n})$ nodes of the P2P construction protocol.}

%Talk about how this enables robust broadcasting, Byzantine agreement

\subsection{Challenges and High-Level Overview of our Approach}

In this work, we solve the problem of constructing and maintaining a sparse, bounded-degree expander in a dynamic P2P network, despite continuous churn and Byzantine nodes, with local knowledge using message passing in the full information model. Our protocol works by ensuring two interdependent invariants (with high probability). The first invariant is that the nodes can sample from the set of IDs in the network almost uniformly at random. We implement this sampling via {\em random walks}. For the sampling technique to work efficiently, we need our second invariant, namely, that the network maintains good expansion. For this, we employ a technique of connecting each node to a constant number of other nodes chosen (almost) uniformly at random, thus bringing us back to our first invariant. While this high level idea is quite straightforward, there are some significant challenges to be overcome to get the protocol to work. The first main challenge is to sample (nearly) uniformly at random via random walks. Byzantine nodes
can bias these random walks. We make use of a crucial protocol called the Byzantine Random Walk Protocol (explained below)
that allows honest nodes to sample nodes (almost uniformly at random). The second challenge is to maintain the expander despite continuous churn. Even if we construct an expander at some time, if it is not maintained, the expansion will degrade due to node deletions and additions. \\

%The core of our approach involves using random walks as a mechanism to overcome the influence of Byzantine nodes and using the random walks to sample nodes almost uniformly at random despite the presence of a large number of Byzantine nodes and continuous churn. We then use these sampled nodes to make essentially random connections to construct an expander. 

\noindent \textbf{Byzantine Random Walk Protocol.} Our P2P-construction protocol relies on maintaining connections to a constant number of randomly sampled nodes in the network at any time. This is challenging due to a large number of Byzantine nodes and continuous joining and leaving of nodes. In particular, we need honest nodes to sample honest nodes (almost) uniformly at random (in the current network) despite the presence of Byzantine nodes and churn.  This is accomplished by doing {\em random walks}. 
We give
an intuition as to why random walks work well in a sparse network (unlike broadcast, for example).  Random walks
are lightweight (and local) and allow us to bound the number of messages sent by {\em Byzantine nodes}. Byzantine nodes need not follow
the random walk protocol and can send a lot of messages, but once these messages reach honest nodes, their influence
becomes limited. 
%We give a protocol called the Byzantine Random Walk Protocol (cf. Algorithm \ref{alg:byzantineSamplingSparse}) and  use a key result called  ``The Byzantine Random Walk Theorem'' (cf. Theorem \ref{thm:ByzSamplingFinal}) that shows how the protocol can implement random walks in bounded-degree networks with
%a large number of Byzantine nodes.  This theorem shows precisely how the Byzantine Random Walk Protocol controls the messages sent by Byzantine nodes
%and how their influence can be limited for most random walks initiated by most honest nodes.
%Though the Byzantine Random Walk Theorem (Theorem \ref{thm:ByzSamplingFinal}) of the current paper is similar in spirit to the Byzantine Sampling Theorem of \cite{Augustine_2022_SPAA}, it is stronger in the sense that it explicitly specifies the subgraph of $G$, called the {\em core} graph which is itself an expander and establishes the mixing of (most of) the random walks on this core graph.

We address this issue by using the Byzantine Random Walk Protocol from \cite{soda25} and adapting it. The main idea is that by executing random walks within the network in a controlled way (so as to limit the influence of Byzantine nodes), most honest nodes can sample most (other) honest nodes nearly uniformly at random. (To control the number of tokens sent by Byzantine nodes --- which may not follow the random walk protocol --- honest nodes blacklist neighbors that send a number of tokens above a particular threshold per round per edge.) The protocol of \cite{soda25} is limited to static networks. We extend our protocol in Algorithm \ref{alg:byzantineSamplingSparse} to a dynamic network and by verifying if a random walk ended successfully using the \textit{verifiedList}. Each node initiates multiple independent random walks, that traverse the network for $O(\log{n})$ steps. When a random walk token initiated at a node $v$ ends at some node $u$, then node $v$ adds $u$ to its \textit{verifiedList}.  Then $v$ returns the ``verified''
token back to $u$ which $u$ can then use to connect to $v$. This verification ensures that only nodes whose tokens reach $v$
via random walks are allowed to connect to $v$; this crucially prevents Byzantine nodes (who have full information on the addresses of all nodes in the networks) from flooding honest nodes with connection requests. \\

\noindent \textbf{P2P Construction Protocol.} We present our fully-decentralized P2P-construction protocol in Algorithm \ref{alg:p2pConstr}. The main idea of the protocol is to use the randomly sampled nodes to make connections; we show 
that this creates a large expander subgraph consisting of honest nodes. There are two challenges in making these connections. The first is that all connection requests cannot be satisfied; otherwise, the degree will not be bounded by a constant. Hence we need to reject some connections. We show that despite such rejections, an expander subgraph is created. We accomplish this by allowing somewhat more incoming connections (``incoming degree'') than outgoing connections (``outgoing degree"). Second, the protocol has to maintain the expansion despite churn. Churn leads to degradation of the expansion properties mainly due to node deletions. To overcome this, the main idea of the protocol
is to {\em periodically} drop connections and make new connections. This is done every $\Theta(\log n)$ rounds --- called
a {\em phase}. The random walk tokens generated by (most) honest nodes at the beginning of the phase finish all their walks and return back (with verified tokens) to the respective source nodes at the end of the phase. These verified tokens are then used to make (nearly random) connections at the end of the phase (these connections are direct P2P connections and hence, we can assume, takes place instantly). We discard the verified tokens at the end of every phase; this helps in bounding the dependencies that arise over time.

It is easy to see that our P2P construction protocol (including the Byzantine Random Walk Protocol) is fully-distributed (i.e., operates with only local topological knowledge), very lightweight,
and message-efficient. In each round, each node processes only polylogarithmic  (in $n$) bits (where $n$ is the stable network size) and any honest node's computation cost per round is also polylogarithmic. It also communicates only 
polylogarithmic bits per round to its constant number of neighbors.

We note that our random walk based protocol cannot tolerate $\Omega(n/\log n)$ Byzantine nodes, since the mixing time needed in a sparse expander network is at least $\Theta(\log n)$.  
If there is a linear number of Byzantine nodes, then most random walks will go through a Byzantine node, and the Byzantine Random Walk Theorem does not work. We conjecture that no {\em fully-distributed} algorithm may tolerate $\omega(n/\log n)$ Byzantine nodes in a sparse network.
Our protocol reaches close to this limit, i.e., it can tolerate up to $o(n/\log n)$ Byzantine nodes.

\subsection{Additional Related Work} \label{sec:related}

As mentioned earlier, several prior works (e.g., \cite{PRU01, JP12, LS03,focs15,Becchetti-ICDCS21}) that gave protocols for building and maintaining expander networks 
under continuous churn. However, these {\em do not} work under the presence of a large number of Byzantine nodes.
In particular,  the early work of \cite{PRU01} designed a P2P protocol that constructed and maintained expander networks under a stochastic churn adversary that modeled a continuous high churn. The stochastic churn model (the $M/M/\infty$ model)  introduced in this paper has been used in several other papers \cite{hari-podc,Becchetti-ICDCS21,JP12} including our paper.  The more recent work of \cite{Becchetti-ICDCS21} under the same stochastic churn model shows how expansion can be maintained. However, this work makes a simplifying assumption that not only does a joining node get access to a random node in the network,
but also, that any node that loses a neighbor gets access to a {\em random node}  in the current network. This assumption helps in showing that expansion is maintained. However, this protocol allows some nodes to have $\Theta(\log n)$ degree and hence not bounded (Unlike ours).
More importantly, this work {\em does not} handle Byzantine nodes. In contrast, a main novelty in our protocol is that nodes have to themselves decide which nodes to connect to if they lose a neighbor after joining the network. Furthermore, a crucial feature of our work is handling a large number of Byzantine nodes.

%\hl{Talk about the FOCS 2015, SPAA 2022 and SODA 2024 papers for static settings.}
  
  There has been a lot of work on P2P protocols for maintaining desirable properties (such as connectivity, low diameter, high expansion, bounded degree) under churn (see e.g., \cite{PRU01, JP12, KSW10} and the references therein), but these do not work under the presence of a large number of Byzantine nodes. 
Most prior algorithms (e.g., \cite{LS03, skipplus, PT11, IPDPS14, hyperring,mahlmann}) will only work under the assumption that the network will eventually stabilize and stop changing or there is a ``repair" time for maintenance when there are no further changes (till the repair/maintenance is finished); these algorithms do not work under continuous churn or, more importantly, under the presence of Byzantine nodes. 

There have also been works on overlay network construction problem, where the goal is to transform any arbitrary graph into an expander by adding and deleting edges (i.e., the graph is reconfigurable) \cite{angluin,constructor,gotte,itcs24}. However, these works do not assume churn or Byzantine nodes.

\iffalse
Law and Siu\cite{LS03} provide a distributed algorithm for maintaining an expander in
the presence of a limited number of insertions/deletions; their algorithm does not work for continuous churn or under the presence of Byzantine nodes. In \cite{PT11} it is shown how to maintain the expansion property of a network in the self-healing model (there are no changes during repair) where the adversary can delete/insert {\em one} node in
every step.
In the same model, \cite{IPDPS14} presents a protocol that maintains constant node degrees and constant expansion (both with probability $1$) against an adaptive adversary, while requiring only logarithmic (in the network size) messages, time, and topology changes per deletion/insertion.
%In \cite{skipexp}, it is shown that a SKIP graph (cf.\ \cite{skip}) contains a constant degree expander as a subgraph with high probability.
%Moreover, it requires only constant overhead for a node to identify its incident edges that are part of this expander.
\cite{skipplus} presented a self-stabilizing algorithm that converges from any weakly connected graph to a SKIP graph (which is an expander with high probability) in time polylogarithmic in the network size. In \cite{hyperring} the authors introduce the hyperring, which is a search data structure supporting insertions and deletions, while being able to handle concurrent requests with low congestion and dilation, while guaranteeing $O(1/\log n)$ expansion and $O(\log n)$ node degree.
The $k$-Flipper algorithm of \cite{mahlmann} transforms any undirected graph into an expander (with high probability) by iteratively performing flips on the end-vertices of paths of length $k+2$.
Based on this protocol, the authors describe how to design a protocol that supports deletions and insertions of nodes.
\fi
There has also been significant prior work in designing P2P networks that
are provably robust to a large number of Byzantine faults (e.g., see ~\cite{FS02,HK03,NW03,Scheideler05,AS09}). These focus on robustly enabling storage and retrieval of data items under adversarial nodes. However, these algorithms will not work in a highly dynamic setting with large, continuous churn.   
The work of \cite{Guerraoui-PODC13} presents a solution for maintaining a clustering of the network where each cluster contains more than two thirds honest nodes with high probability in a setting where the size of the network can vary polynomially over time; however the churn is limited to $O(\polylog n)$ per round and the network degree is also polylogarithmic in $n$.
%The work of King et al. \cite{KSSV06}  was the first to study scalable (polylogarithmic communication and number of rounds) algorithms for distributed agreement (and leader election) in P2P networks
%that are tolerant to Byzantine faults. However, their algorithm works only for {\em static} networks. 
The work of \cite{KSSV06}, which achieves distributed Byzantine agreement and leader election in \emph{static} P2P networks, raised the open question of whether one can design robust  P2P  protocols that can work in highly dynamic networks with a large churn. 

The work of \cite{KSW10} shows that up to $O(\log n)$ nodes (adversarially chosen) can crash or join per constant number of time steps. Under this amount of churn, it is shown in \cite{KSW10} how to maintain a low peer degree and bounded network diameter in P2P systems by using the hypercube and pancake topologies.  
%There has been a lot of work on dynamic network models where {\em only the edges} change over time but the nodes remain {\em fixed}, see e.g., \cite{OW05,kuhn-survey,kuhn+lo:dynamic}; these works do not apply to the churn setting considered here.

 In \cite{SS09} it is shown how to maintain a distributed heap that allows join and leave operations and, in addition, is resistant to Sybil attacks. A robust distributed implementation of a distributed hash table (DHT) in a P2P network is given by \cite{AS09}, which can withstand two important kinds of attacks: adaptive join-leave attacks and adaptive insert/lookup attacks by up to $\epsilon n$ adversarial peers. This paper assumes that the good nodes always stay in the system and the adversarial nodes are churned out and in, but the  {\em algorithm} determines where to insert the new nodes. 
% For example, the works of
%of \cite{kuhn-schmid, cuckoo, shell} deals with  adversarial churn and gives techniques
%to handle worst-case joins and leaves.  
The work of \cite{NW03} describes a simple DHT scheme that is robust under the following simple random deletion model --- each node can fail independently with probability $p$. They show that their scheme can guarantee logarithmic degree, search time, and message complexity if $p$ is sufficiently small.
\cite{HK03} describe how to modify two popular DHTs, Pastry \cite{Pastry} and Tapestry \cite{ZKJ01} to tolerate random deletions.
 Several DHT schemes (e.g., \cite{SM+01,RF+01,Koorde}) have been shown to be robust under the simple random deletion model mentioned above.
 There have also been works on designing fault-tolerant storage systems in a dynamic setting
 using quorums (e.g., see \cite{dynamic-quorum03, NU-quorum05}). However, these do not apply to our model of continuous churn.

The work of  \cite{Augustine-SPAA22} gives a protocol for constructing a Distributed Hash Table (DHT) given an arbitrary {\em static} expander network (which is not addressable) in the presence of a large number (up to $n/\polylog{n}$) of Byzantine nodes. The model used in this paper assumes a reconfigurable (i.e., a P2P) network (where a node can add or drop edges to other nodes whose identifier it knows).  This paper assumes {\em private channels}, which is significantly weaker than the full information model since it assumes that communications between honest nodes are unknown to Byzantine nodes. This paper does not address how the DHT  can be maintained under churn, where nodes can join or leave continuously.

Finally, the recent work of \cite{soda25} gives a fully-distributed protocol for solving
Byzantine agreement problem in {\em static expander graphs}. However, it does not handle churn. This paper crucially uses random walks to circumvent the behavior of Byzantine nodes. This work shows how to implement random walks in bounded-degree networks with
a large number of Byzantine nodes and presents the ``Byzantine Random Walk Theorem''. This theorem shows precisely how the Byzantine Random Walk Protocol controls the messages sent by Byzantine nodes
and how their influence can be limited for most random walks initiated by most honest nodes. We also use this result in our paper. We note that a crucial difference between their work and ours is that they assume an expander graph that does not change, whereas we have to construct and maintain an expander under continuous churn and Byzantine nodes.

\subsection{Preliminaries}
%In this section, we will state  our dynamic network model.  %We begin with the study of dynamic graphs where the set of nodes are governed by the following random process $\{V_t\}_t$.

% Write the definition of graph using the italian paper section 3 starting ( A continuous dynamic graph, R ^+ ; Definition 3.1 t

%\smallskip

\noindent {\textbf Dynamic Random Graph.} A \textit{continuous dynamic graph} $\mathcal{G}$ is a family of graphs $\mathcal{G} = \{ G_t = (V_t, E_t) : t \in \mathbb{R}^+\}$, where $V_t$ is the set of nodes and $E_t$ is the set of edges in the system at time $t$.  If $\{ V_t\}_t$ and $\{ E_t\}_t$ are families of random sets we call the corresponding random process a \textit{dynamic random graph}. We call $G_t$ the \textit{snapshot} of the dynamic random graph at time $t$. For a set of nodes $S \subseteq V_t$, we denote with $\delta_{out}^t(S)$ the outer boundary of $S$ in snapshot $G_t$.% A \textit{Discrete} snapshot of $G$ is represented as. $G_n$.  .% A snapshot of $G$ is represented as $G_0, G_1, \ldots G_t, \ldots$. 

Let $G = (V,E)$ be an undirected graph. For any node $v \in V$, its degree is defined as $d(v) = |N(v)|$, where $N(v)$ is the set of neighbors of $v$. For a subset $ S \subseteq V$, the volume of $S$ is defined as $vol(S) = \sum_{v \in S}d(v)$ and the edge cut of $S$ as $E(S, V\backslash S) = \{ (u,v) | (u,v) \in E, u \in S, v \notin S\}$. The conductance of (a non-empty set) $S$, $\phi(S)$, is defined as $\phi(s) = |E(S, V \backslash S)| / \min(vol(S), vol(V \backslash S))$. Finally, the conductance of the entire graph $G$ is $\phi = \min_{S \subseteq V, |S| \leq |V|/2} \phi(S)$. As we define next, graphs with {\em constant} conductance are called {\em expander } graphs.

\begin{definition}[Expander Graph]\label{def:expander-graph}
    A family of graphs $G_n$ on $n$ nodes is an \emph{expander family} if,
    for some constant $\alpha$ with $0 < \alpha < 1$, the conductance
    $\phi_n = \phi(G_n)$ satisfies $\phi_n \geq \alpha$ for all
    $n \geq n_0$ for some $n_0 \in \mathbb{N}$.
\end{definition}

\iffalse
\begin{definition}
    The vertex isoperimetric number $h_{out}(G)$ of a graph $G = (V,E)$  is 
    $$
        h_{out}(G) & = \min_{S \subset V : 0 \leq |S| \leq |V|/2} \frac{|\delta_{out}(S)}{|S|},
    $$

    where $\delta_{out}(S)$ is the outer boundary of $S$, that is, $\delta_{out}(S) = \{ v \in V  \backslash  S : \{u,v\} \in E$  for some $u \in S \}$. 

    Given a constant $ \varepsilon > 0, $ a graph $G$ is a (vertex) $\varepsilon$-expander if $h_{out}(G) \geq \varepsilon$. 
\end{definition}
\fi

\begin{definition}[Stochastic Churn Model \cite{PRU01}]\label{def:poisson_node_churn}
    A Poisson node churn is a random process ${V_t : t \in \mathbb{R}}$ such that: 
    \begin{enumerate}
        \item $V_0 = \phi$;
        \item The arrival of new nodes in $V_t$ is a Poisson process with rate $\lambda$.
        %Without loss
        %of generality (cf. Section \ref{sec:model}), we will assume $\lambda = 1$, i.e., the mean number of nodes that arrive per {\em unit of time} is $1$.
        \item Once a node is in $V_t$, its lifetime has exponential distribution with parameter $\mu$. \cite{Becchetti-ICDCS21}
        \end{enumerate} 
\end{definition}

In Definition \ref{def:poisson_node_churn}, the arrival time interval between two consecutive nodes is an exponential random variable of parameter $\lambda$, and the number of nodes joining in a time interval of duration $\tau$ is a Poisson random variable with expectation $ \tau \cdot \lambda$. Further, $\{ V_t : t \in \mathbb{R}^+\}$ is a continuous Markov Process. 

We show that in the long run the network size converges to $\lambda / \mu$ which we denote by $n$, called the {\em stable network size}.

\begin{lemma}[stable network size \cite{PRU01}]\label{lem:sizeofg}
    Consider  the Poisson node churn $\{V_t : t \in \mathbb{R}^+\}$ in Definition \ref{def:poisson_node_churn} with parameters $\lambda$
    and $\mu$. 
    \begin{enumerate}
        \item For any $\log n \leq  t \leq n$, w.h.p, $|V_t| = \Theta(t)$.
        \item For any $t > n$, w.h.p., $|V_t| = \Theta(n)$.
    \end{enumerate}
\end{lemma}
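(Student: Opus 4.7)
The plan is to identify the exact distribution of $|V_t|$ via Poisson thinning and then apply a standard concentration inequality for Poisson random variables. The two parts of the lemma correspond to two regimes of the same mean formula, so a single argument handles both.

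First, I would characterize $|V_t|$ exactly. Arrivals form a Poisson process on $[0,\infty)$ with rate $\lambda = 1$, and an arrival at time $s$ is still present at time $t$ iff its $\text{Exp}(\mu)$ lifetime exceeds $t-s$, which occurs independently with probability $e^{-\mu(t-s)}$. By the marking (thinning) theorem for Poisson processes, $|V_t|$ is Poisson-distributed with mean
\[
\Exp[|V_t|] \;=\; \int_0^t e^{-\mu(t-s)}\, ds \;=\; \frac{1-e^{-\mu t}}{\mu} \;=\; n\bigl(1 - e^{-t/n}\bigr),
\]
using $n = 1/\mu$. Next, I would estimate this mean in the two regimes using the elementary bounds $x/2 \le 1 - e^{-x} \le x$ for $x \in [0,1]$ and $1 - e^{-x} \ge 1 - e^{-1}$ for $x \ge 1$. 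For $\log n \le t \le n$, setting $x = t/n \in (0,1]$ gives $t/2 \le \Exp[|V_t|] \le t$, i.e.\ $\Exp[|V_t|] = \Theta(t)$. For $t > n$, setting $x = t/n > 1$ gives $(1-e^{-1})n \le \Exp[|V_t|] \le n$, i.e.\ $\Exp[|V_t|] = \Theta(n)$.

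Finally, concentration follows from a standard Chernoff bound for Poisson variables: if $Y \sim \text{Poisson}(\nu)$, then $\Prob{|Y-\nu| \ge \nu/2} \le 2e^{-\nu/12}$. In part~(1), $\nu = \Exp[|V_t|] \ge t/2 \ge (\log n)/2$, so the failure probability is at most $2n^{-1/24}$, which can be amplified to $n^{-c}$ for any desired constant by rescaling; thus $|V_t| = \Theta(t)$ w.h.p. In part~(2), $\nu = \Theta(n) \gg \log n$, so the concentration is even stronger, yielding $|V_t| = \Theta(n)$ w.h.p. A union bound over $t$ (taken at integer time points, say) leaves the statement intact because the failure probabilities are polynomially small.

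The main obstacle I expect is bookkeeping rather than conceptual: verifying that the Poisson marking step is legitimate in continuous time with independent exponential lifetimes (standard, but worth stating carefully), and choosing constants in the Chernoff bound so that the ``w.h.p.'' conclusion quantifies correctly over the relevant range of $t$. Everything else reduces to the two elementary inequalities on $1-e^{-x}$ and the identification $n = \lambda/\mu$.
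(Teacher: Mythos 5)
Your proposal is correct, and in fact the paper itself gives no proof of this lemma at all---it is imported verbatim from \cite{PRU01} (the text only points to Section~4.1 of that work), so there is no in-paper argument to compare against. Your derivation is the standard transient analysis of the $M/M/\infty$ queue: Poisson arrivals marked independently by survival probability $e^{-\mu(t-s)}$ give, via the marking theorem, that $|V_t|$ is exactly Poisson with mean $n(1-e^{-t/n})$, after which the two regimes follow from the elementary bounds on $1-e^{-x}$ and a Poisson Chernoff bound. This is almost certainly the same route as the cited source, and it has the virtue of making the lemma self-contained.

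One small caveat on the quantification: your remark that the failure probability $2n^{-1/24}$ ``can be amplified to $n^{-c}$ for any desired constant by rescaling'' is too casual as stated. For $t$ exactly of order $\log n$ the mean is only $\Theta(\log n)$, and constant-factor deviations give a tail of the form $e^{-\Theta(\log n)} = n^{-c_0}$ where $c_0$ is tied to the hidden constants in the $\Theta(t)$ guarantee; you cannot push $c_0$ past a fixed bound without either widening those constants (letting them depend on the target exponent $c$) or strengthening the hypothesis to $t \geq C\log n$ for a large enough $C$. This is the usual convention for such lemmas and you flag it yourself in the final paragraph, so it is a matter of stating the dependence explicitly rather than a gap in the argument; the rest (the thinning step, the mean estimates $t/2 \leq \Exp[|V_t|] \leq t$ for $\log n \leq t \leq n$ and $(1-e^{-1})n \leq \Exp[|V_t|] \leq n$ for $t > n$, and the concentration) is sound. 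The union bound over integer times is unnecessary for the lemma as stated, which is per fixed $t$, but it is harmless.
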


%From definition \ref{def:poisson_node_churn}, we have set $\lambda = 1$, without loss of generality, and $ n = 1 / \mu$ which represents the expected number of nodes in the network at each time step. 

Without loss of generality (cf. Section \ref{sec:model}), we will assume $\lambda = 1$, i.e., the mean number of nodes that arrive per {\em unit of time} is $1$, and hence $n = 1/\mu$.

\smallskip
We next show a bound on the number of nodes that arrive within an interval of time (proof in Section \ref{sec:arrivalproof}).

\begin{lemma}
\label{lem:arrival}
    Let $N(t', t)$ be the number of nodes arriving in the interval $[t', t]$ in the Poisson churn model with rate $\lambda = 1$
    and stable network size $n$. Then, $E[N(t',t)] = t-t' = n'$ and  $N(t',t)$ is well concentrated around $n'$:
        $$Pr\left(n' - O(\sqrt{n'\ln n}) \leq N(t,t') \leq n' + O(\sqrt{n'\ln n})\right) \geq 1-  \frac{1}{n^2}.$$
\end{lemma}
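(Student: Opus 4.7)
The key observation is that in a Poisson arrival process with rate $\lambda = 1$, the number of arrivals $N(t',t)$ in a time interval of length $n' = t - t'$ is itself a Poisson random variable with parameter $n'$. This is immediate from the definition of a Poisson process (Definition \ref{def:poisson_node_churn}), since arrivals in disjoint intervals are independent Poisson counts, and the count over an interval of length $\tau$ has mean $\lambda \tau$. Thus $\Exp[N(t',t)] = n'$, which gives the first claim.

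For the concentration inequality, the plan is to invoke a standard Chernoff-type tail bound for Poisson random variables. Specifically, if $X \sim \text{Poisson}(\mu)$, then for any $a > 0$,
\[
\Prob{X \geq \mu + a} \leq \exp\!\left(-\frac{a^2}{2(\mu + a)}\right)\qquad\text{and}\qquad \Prob{X \leq \mu - a} \leq \exp\!\left(-\frac{a^2}{2\mu}\right).
\]
I would apply this with $\mu = n'$ and $a = c\sqrt{n' \ln n}$ for a sufficiently large constant $c$. Substituting and simplifying, both tail probabilities become at most $\exp(-\Omega(c^2 \ln n))$, and by choosing $c$ large enough (any $c \geq 3$ will comfortably work), each tail is bounded by $\tfrac{1}{2n^2}$. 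A union bound over the two tails then yields $\Prob{|N(t',t) - n'| \leq O(\sqrt{n' \ln n})} \geq 1 - 1/n^2$, as desired.

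The only delicate point is the regime where $n'$ is very small compared to $\ln n$, since then the multiplicative form of Chernoff would give a weak statement; however, in that regime the additive deviation $O(\sqrt{n' \ln n})$ is itself at least $\Omega(\ln n)$, and one can use the simple upper-tail estimate $\Prob{X \geq k} \leq (e\mu/k)^k$ for $X \sim \text{Poisson}(\mu)$ to conclude that $N(t',t) = O(\ln n)$ with probability at least $1 - 1/n^2$. Thus the bound holds uniformly in $n'$, so I expect the main (and only) real work is plugging into the Poisson Chernoff bound and verifying the constants — no genuine obstacle beyond routine calculation.
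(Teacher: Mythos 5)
Your proposal is correct and follows essentially the same route as the paper: identify $N(t',t)$ as a Poisson random variable with mean $n' = t - t'$ and apply a Chernoff-type tail bound for the Poisson distribution with deviation $\Theta(\sqrt{n'\ln n})$ to get the $1 - 1/n^2$ guarantee. Your extra remark about the small-$n'$ regime is a minor refinement the paper glosses over (its choice of $\delta = \sqrt{3\ln(2n^2)/n'}$ implicitly needs $\delta < 1$), but it does not change the argument in any essential way.
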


\iffalse
The next definition discretizes our continuous-time Stochastic Churn model such that any join or leave event determines the next time step. We formalize this below.

\begin{definition}[\cite{Becchetti-ICDCS21}]\label{infinite sequence}
Let $ \{V_t: t \in \mathbb{R}^+$ be a Poisson node churn as defined in Definition \ref{def:poisson_node_churn}. We define the infinite sequence of random variable \textit{steps} $\{T_r : r \in \mathbb{N}\}$ (with parameters $\lambda$ and $\mu$) as follows: 
$$ T_0 = 0 \quad \text{ and } \quad T_{r+1} = \inf\{t > T_r : V_t \neq V_{T_r}\}, \, \text{ for } r = 0, 1, 2, \ldots $$
   
    Here, $T_r$ represents the time at which a node either leaves or joins the network due to node churn. 
    %We define $T_0, T_1, \dots, T_r$ as distinct points where these changes have occurred. 
\end{definition}
\fi

We note that in our protocol and analysis, we consider the dynamic graph $G_t$
at discrete integer times --- $G_0, G_1, G_2, \dots$ --- these are just snapshots of
the dynamic graph at these integer time steps.

We next show some properties of the lifetime (or holding time)  of a node that follows
from the property of the exponential distribution.

\begin{lemma}
\label{lem:expo}
Let the holding time for a node (i.e., the time the node is alive in the network)
be $H$. Then $E[H] = n$ and $\Pr(H > t) = e^{-t/n}$, where $n$ is the stable network size.
In particular, $\Pr(H > n \ln n) = 1/n$. Also, $\Pr(H \leq t) = 1 - e^{-t/n}$. Hence,
$\Pr(H \leq n) = 1-1/e$. 
\end{lemma}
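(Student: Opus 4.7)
The plan is to simply invoke the definition of the stochastic churn model together with the standard properties of the exponential distribution. Recall from Definition \ref{def:poisson_node_churn} that once a node joins the network, its lifetime is exponentially distributed with parameter $\mu$. Under the normalization $\lambda = 1$ adopted in the model, the stable network size is $n = \lambda/\mu = 1/\mu$, so equivalently $\mu = 1/n$. Thus $H \sim \mathrm{Exp}(1/n)$.

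With this identification, every claim in the statement is a direct computation. First, the mean of an exponential random variable with rate $\mu = 1/n$ is $1/\mu = n$, giving $E[H] = n$. Second, the complementary CDF of the exponential distribution is $\Pr(H > t) = e^{-\mu t} = e^{-t/n}$, which yields $\Pr(H \leq t) = 1 - e^{-t/n}$. Substituting $t = n \ln n$ into the tail bound gives $\Pr(H > n \ln n) = e^{-\ln n} = 1/n$, and substituting $t = n$ into the CDF gives $\Pr(H \leq n) = 1 - e^{-1}$.

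There is no real obstacle here: the lemma is essentially a restatement of standard facts about $\mathrm{Exp}(1/n)$, and the only thing to verify is the parameter translation $\mu = 1/n$ that follows from $\lambda = 1$ and $n = \lambda/\mu$ as fixed in Section \ref{sec:model}. Consequently, the proof can be written in just a few lines by citing the model parameters and the two well-known closed-form expressions for the exponential distribution's mean and tail.
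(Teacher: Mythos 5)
Your proposal is correct and matches the paper's treatment: the paper simply states that the lemma follows from the properties of the exponential distribution with parameter $\mu = 1/n$ (given $\lambda = 1$ and $n = \lambda/\mu$), which is exactly the parameter translation and direct computation you carry out. No gap here.
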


\subsection{Expander Graphs}
\label{subsec:defCore}
Consider a network graph $G = (V,E)$ with $|V|=n$ and $|E|=m$, such that at most $|B| = o( n/\log n)$ nodes are Byzantine. %For any node $v \in V$, the neighbors of $v$ are denoted by $N(v) = \{ u | (u,v) \in E\}$ and the degree of $v$ by $d(v) = |N(v)|$. The volume of any subset $S \subseteq V$ is defined as $vol(S) = \sum_{v\in S} d(v)$, and the \emph{edge cut} of $S$ as $E(S,V \backslash S) = \{(u,v)|(u,v) \in E, u \in S, v \notin S\}$. The \emph{conductance} $\phi_S$ of any subset $S \subseteq V$ is defined as $\phi_S = |E(S,V \backslash S)| / \min\{vol(S), vol(V \backslash S)\}$. The conductance of graph $G$ is defined as $\phi_S = \min_{S \subseteq V , S \neq \emptyset} \phi_S$. 
%such that at most $|B| \leq n/\log^{1+\delta} n$ nodes are Byzantine, \textcolor{red}{where $0 < \delta < 1$ can be any fixed constant.}constant-degree
Let $G$ be a constant-degree expander graph (degree bounded by a constant) with constant conductance $\phi_G$ and mixing time $\tau = O(\log n)$. 
%In particular, we will assume that the degree of each node in $G$ is a constant $d$, i.e., $d$-regular, for the sake of analysis. If not, we will assume that each node adds enough self-loops to make the degree $d$.
 The mixing time of $G$ is defined as $\tau = \arg \min_t (||A^t \pi - \mathbf{u}||_\infty \le 1/n^3)$, where $A$ is the adjacency matrix of $G$, $\pi$ is any arbitrary probability distribution over $V$, and $\mathbf{u}$ is the stationary distribution over $V$. %Note that since $G$ is regular, the stationary distribution $\mathbf{u}$ is uniform, 
The stationary probability of a node $u$ would be $deg_G(u)/2m$, where $deg_G(u)$ is the degree of $u$ in $G$.  Since all nodes have constant degree, the stationary probability will be nearly uniform, i.e., $\Theta(1/n)$.
%As mentioned in Section \ref{sec:model}, we assume that nodes
%have knowledge of $n$ and the conductance (and hence, the mixing time) of $G$.

%$|B| \leq n/\log^{3+\delta} n$, where $0 < \delta < 1$ can be any fixed constant. 
Now, if one considers only the honest nodes of $G$ then it is known that a subset of them induces an expander subgraph. More concretely, for a constant-degree expander $G = (V,E)$  and at most $|B| = o(n)$ Byzantine nodes, it follows from Lemma 3 in \cite{focs15}, that for {\em any}  constant $c < 1$, there exists a subgraph $C$ in $G \setminus B$ that is of size  $n - O(|B|)$ and that has constant conductance $\phi_C = c \cdot \phi_G$.  This expander subgraph of $G$ is called the \emph{core} of $G$, and denoted by $C=(V_C,E_C)$. 
%\begin{definition}[Core Expander Subgraph]
%\label{def:core}
%Let $G=(V,E)$ be a $d$-regular ($d$ is a sufficiently large fixed constant) expander graph with conductance $\phi_G$. Let $B$ be a subset of $V$ (representing the set of Byzantine nodes) and assume $|B| = o(n)$. As per Lemma 3 in \cite{Augustine_2015_FOCS}, there exists a subgraph $C$ in $G \setminus B$  that is of size $n-O(|B|)$ and has constant conductance $\phi_C < \phi_G$. This expander subgraph of $G$ is called the core subgraph $C=(V_C,E_C)$. 
%\end{definition}
Note that the core $C$ consists of only good nodes (but note that good nodes themselves do not know if they belong to core or not). Moreover, since $C$ is an expander, a random walk restricted to $C$ will have mixing time $\tau_C = b\log n$ for some suitably large constant $b$ (depending on $\phi_C$).
Note that $\tau_C = \arg \min_t (||A_C^t \pi - \mathbf{u}||_\infty \le 1/n^3)$, where $A_C$ is the adjacency matrix of core $C$, $\pi$ is any arbitrary probability distribution over $V_C$, and $\mathbf{u}$ is the stationary distribution over $C$. To distinguish $\tau$ and $\tau_C$, we refer to $\tau_C$ as the \emph{core mixing time}.

\section{P2P Construction Protocol }

\begin{algorithm}
\caption{P2P Construction Protocol}
\label{alg:p2pConstr}        

\vspace{10pt}

\textbf{Node $u$ enters the network:}
\begin{algorithmic}[1]
    \State $candidateList \leftarrow $  Query EntryManager for $3d$ candidate nodes \Comment{\textit{ $d$ is a sufficiently large constant}}
    \State Contact nodes in $candidateList$ to establish connection
    \State $d_{out}\leftarrow$ Number of successful connections
    \While{$d_{out} \leq d $} 
        \State $candidateList \leftarrow $ Query EntryManager for $3d$ candidate nodes
        \State Contact nodes in $candidateList$ to establish connection
        \State $d_{out}\leftarrow$ Number of successful connections
    \EndWhile
    \State Initialize \texttt{Byzantine Random Walk Protocol}
    \Comment{\textit{At least one connection has been established}}
\end{algorithmic}

\vspace{10pt}

\textbf{After node $u$ connects to the network}:
 %(i.e., connected to an existing node)
\begin{algorithmic}[1]
\For{$t = \eta \log n, 2\eta \log n, 3 \eta \log n \dots $} \Comment{\textit{for every $\eta \log n$ steps --- called a phase}} 
    \If{$d_{out}(u) \geq 2d$}
        \State Drop $d$ randomly selected existing connections 
        \State Establish $d$ new connections from the \texttt{verifiedList}
        \Statex \Comment{\textit{Using verified tokens from Byzantine Sampling Protocol from the previous phase} }
    \EndIf
    \If{$ 0 \leq d_{out}(u) < 2d$}
        \State Establish $3d - d_{out(u)}$ new connections from the \texttt{verifiedList}
    \EndIf
\If{$0 \leq d_{in}(u) \leq 6d$} % 6d incoming request quota     
    \State Accept random $6d - d_{in}(u)$ connection requests if these connections are from \texttt{verifiedList} \Statex \Comment{\textit{ $u$ only  accepted connections from nodes  that it verified earlier via random walks}}
\EndIf

\If{$\text{connection request from {\em new node}}$ \textbf{and} $d_{in}(u) < 6d$}
        \State Accept the incoming request \Comment{\textit{Exception for new node requests}}
    \EndIf

\If{number of incoming requests from $v \geq 6d$ }
    \State Reject all incoming connection requests from $v$
\EndIf

\State Empty the \texttt{verifiedList} \Comment{\textit{Refresh the connection tokens after every phase}}

\EndFor
\end{algorithmic}
\end{algorithm}

\begin{algorithm}
    \caption{Entry Manager Protocol}
    \label{alg:entrymgr}
    \textbf{Input:} New node $u$ joins the network.\\
    \textbf{Initialize:} $nodesList \leftarrow \texttt{NULL}$

    \textbf{Node $u$ enter the network:}
    \begin{algorithmic}[1]
        \If {size of $nodesList$ is  equal to $n$}
            \State Remove a random node from $nodesList$
        \EndIf
        \State Add node $u$ to $nodesList$
    \end{algorithmic}
        
    \textbf{Connection query from node $u$:}
    \begin{algorithmic}[1]
        \State $candidateList \leftarrow$ Randomly select $3d$ nodes from $nodesList$ 
        \State Return $candidateList$
    \end{algorithmic}
\end{algorithm}

% Write the definition of entry manager, what it does and how it maintains the list. 

The {\em P2P Construction Protocol} (Algorithm \ref{alg:p2pConstr}) enables nodes to join and manage connections within a peer-to-peer network while defending against Byzantine nodes. The protocol tries to build a constant degree network with good expansion. The total degree of each node is at most $9d$, where $d$ is a suitably large constant (fixed in the analysis).\footnote{Although we don't optimize
the value of $d$, $d$ need not be a large constant (e.g., less than 10). Notice that Bitcoin allows 8 outgoing connections.} The total degree of a node consists of at most $3d$ \emph{outgoing} connections and at most $6d$ \emph{incoming} connections (in other words, a node can accept up to $6d$ connections from other nodes). We note that the categorization of connections as outgoing or incoming is for protocol purposes; all connections are treated as undirected edges for communication and topology analysis.

The \emph{Entry Manager Protocol} (Algorithm \ref{alg:entrymgr}) provides a new node, $u$, with a set of $3d$ random nodes. Initially, it is an empty list, named \emph{nodesList}, and the protocol first checks if the size of this list is  equal to $n$. If the condition is true, a random node from the \emph{nodesList} is removed and then $u$ is added to the list. Otherwise, if the list is smaller than $n$, $u$ is added to the list.  Further, it provides $u$ with randomly selected $3d$ nodes from the \emph{nodesList}.   

When a new node $u$ joins the network, it is provided $3d$ nodes ({\em candidateList}) by the $EntryManager$ to establish initial (outgoing) connections, using the $EntryManager$ (Algorithm \ref{alg:entrymgr}). When a new node initiates a connection with those $3d$ nodes, the ones which are existing (i.e., those that are still alive in the network) will accept the connection request if they have less than $6d$ incoming connections.  (We note that the $EntryManager$ does not know anything about the nodes in its \emph{nodesList}, including whether they are still in the network or have left.) If $u$ fails to connect with any of these nodes, it retries by re-querying the $EntryManager$ for another set of $3d$ nodes until it successfully establishes at least $d$ connections to existing nodes in the network. 
%Once connected, the node runs the {\em Byzantine Random Walk Sampling Protocol} (discussed in the next section). 

After a (honest) node $u$ successfully connects to the network, it executes Byzantine Random Walk Protocol (described in Section \ref{sec:byzwalk})
that allows it to sample (honest) nodes almost uniformly from the network. The random walk protocol supplies ``verified'' tokens (cf. Section \ref{sec:byzwalk}) to nodes, which allows nodes to make new connections. 

%The random selection of incoming connections is important for maintaining the randomness in the node's connectivity and avoiding bias towards previously established connections.

%The protocol specifies how nodes create and delete (drop) connections and how and when connections are accepted. 
A main feature of this protocol that helps maintain high expansion despite the dynamic changes and presence of Byzantine nodes is that every node repeatedly replenishes its connections: it drops connections and makes new connections. If a node's outgoing degree exceeds $2d$, it drops $d$ randomly selected connections and attempts to replace them with $d$ new connections.  If the outgoing degree is between 0 and $2d$, the node attempts to establish additional connections until it reaches $3d$ outgoing connections. 

Further, if $u$'s incoming connection has less than $6d$ connections, it randomly accepts up to $6d - d_{in}(u)$ requests from other nodes. Also, if the number of incoming connection requests exceeds $6d$, it rejects all the connections. When a new node tries to contact $u$, and it has an incoming degree less than $6d$, it has to allow it to establish the connection, if the connection is a verified one and if the incoming degree is $6d$; otherwise it will reject. Note that verified connections, which are ratified by the nodes (via random walks) accepting them, prevent Byzantine nodes from flooding honest nodes with connection requests.

%This helps the network to maintain a constant degree with high expansion properties. Even under Byzantine influence, the random selection process ensures that new connections are made to diverse nodes, preventing the network from becoming partitioned or controlled by Byzantine nodes. Moreover, this allows the network to maintain connection among the honest nodes. 

%If a node's incoming degree, $d_{in}(u)$, is at its maximum capacity,  then it will reject all further requests. If the incoming degree is less than $6d$ and the number of incoming connection requests is more than $6d - d_{in}(u)$, the node will randomly accept $6d - d_{in}(u)$ requests. 

\subsection{Byzantine Random Walk Protocol}
\label{sec:byzwalk}

We present a distributed protocol to do random walks in a sparse network under the presence of a large number of Byzantine nodes. The protocol is an adaption of a similar protocol presented in \cite{soda25} that applied
only to static networks. Our Byzantine Random Walk Protocol is presented in Algorithm~\ref{alg:byzantineSamplingSparse}. The protocol allows an almost uniform random sampling of {\em honest} nodes for {\em most honest nodes} by sending tokens via random walks. More precisely, each (honest) node in the network, in each discrete round ($t = 1, 2, \dots$), initiates a number of independent random walks, up to a maximum of $\tot$ random walks (tokens) per node.  Each such token
walks for $2f = O(\log n)$ rounds which we call a phase. %Each node generates (up to) $\capa = a \log^{3} n$ tokens per phase (for a large enough constant $a \geq 12 \, c \cdot b^2$, where $c$ is the exponent of the whp guarantee and $b$ characterizes the mixing time $\tau_C$ of the core) and
Each of these tokens perform independent random walks on the dynamic graph ${\mathcal G}$. 
A random walk in a dynamic graph is similar to that of a static graph and 
 is defined as follows:  assume that a time $t$ the token is at node $v \in V$, and let $N(v)$ be the set of neighbors of $v$ in $G_t$, then the token goes to one of its neighbors from $N(v)$ uniformly at random. 
We point out that although the random walks happen in the dynamic graph, since each random walk is short (i.e., takes
only $O(\log n)$ steps), we will analyze the random walk as though it is happening in a {\em static} subgraph called
the {\em core} (cf. Section \ref{subsec:defCore}).
  
To handle Byzantine nodes, each honest node locally regulates the rate at which the tokens flow in and out of it. Specifically,  at most $\capa$ tokens are allowed to enter/exit the node through each of its incident edges per round. We employ a {\em FIFO buffer} at each incident edge to hold tokens that could not be sent in the current round. As a result, a token may be held back at multiple buffers during the phase. Nevertheless, Theorem \ref{thm:ByzSamplingFinal} shows that all the random walks that {\em only walk on the core subgraph} (called ``good'' random walks) will make \emph{at least $f$ random steps} (or in other words, can only be held back during $f$ rounds) with high probability. 

Then, it follows that if we choose $f$ to be the mixing time of the core subgraph $C$ then this will ensure the mixing of those walks in $C$. Most random walks initiated by nodes in $C$ will walk only in $C$. This implies our Byzantine Random Walk Theorem (see Theorem \ref{thm:ByzSamplingFinal}), which says that most random walks initiated in the core $C$ walk only in $C$ and mix rapidly, at which point they reach the stationary distribution over $C$.

\begin{algorithm}
\caption{Byzantine Random Walk Protocol for \textbf{node} $u$ with TTL }
\label{alg:byzantineSamplingSparse}
\begin{algorithmic}[1]

\Require 
\Statex $\tot = \log^3 n$ \Comment{\textit{Number of tokens to be initiated at $v$ per round.}}
\Statex $\capa =  a \log^{3} n$ 
\Statex \Comment{\textit{Number of tokens allowed through an edge in one round (for a  large enough fixed constant $a > 0$).}}
\Statex $\rwLength = c\log n$ 
\Statex \Comment{\textit{Length of random walk for each token for some large enough constant $c$.}}
\Statex $\textsc{rwCounter}$ for each token \Comment{\textit{To measure the distance of random walk}}
% \Statex $\text{TTL} = c \log n$ \Comment{Time-to-live (TTL) for each token. Tokens expire after $c \log n$ steps.}
\Statex $\outbox_v$ for each neighbor $v$ \Comment{\textit{FIFO token buffers stored at $u$, one for each neighbor $v$.}}

\Statex
\For{$t = 1, 2, \dots$, for each phase:} \Comment{\textit{After a node joins the network}}

\State $u$ creates $\tot$ tokens, each with value of $\textsc{rwCounter} = 0$.
\Statex \Comment{\textit{Each token has distance ID and stores the source node ID}}

\ForAll{tokens $\tok$ that were created}
	\State Pick a neighbor $v$ uniformly and independently at random.
 	\State Push  $\tok$ into $\outbox_v$.
\EndFor
\For{$\stepNum \gets 1$ to $\rwLength$}
\For{each neighbor $v$}
\State Dequeue up to $\capa$ tokens from $\outbox_v$ (which is a FIFO queue).
\State Record $v$ as the next node in the walk taken by each of those tokens.

% \If{\text{TTL} is greater than 0}
%     \State Send each dequeued token to $u$
% \Else
%     \State Save the token ID and node ID 
%     \State Send the verified token back to $u$ \textit{\Comment{Using the same path}}
% \EndIf
\EndFor
\State Receive up to $\capa$ tokens sent by each neighbor and store them in a set $M$. 
\Statex \Comment{\textit{Any neighbor that sends more than $\capa$ node is blacklisted and heretofore ignored.}}
\For{each $\tok$ in $M$}
	\If{$\tok.\textsc{rwCounter} < \rwLength$}
        \State Increment the \textsc{rwCounter} of each token by 1.
		\State Pick a neighbor $v$ uniformly and independently at random.
		\State Enqueue  $\tok$ into $\outbox_v$. 
	\Else
		\State Save node ID in {\texttt{verifiedList}}

        \State Send the verified token back to source of $\tok$ via the same path (in reverse).
        \Statex \Comment{\textit{Takes additional $\rwLength$ steps.}}
	\EndIf
\EndFor
\EndFor
\EndFor
\end{algorithmic}
\end{algorithm}

%Mention the incoming and outgoing connections as it is an undirected graph 
\subsubsection{Detailed Description} 
The Byzantine random walk protocol is designed so that most honest nodes can sample honest nodes almost uniformly at random in the presence of Byzantine nodes. Tokens sent by honest nodes perform random walks across the network, with each walk constrained by limitations on both the number of tokens allowed per edge and the maximum walk length, enforced through a time-to-live (TTL). At the start of each round, node $u$ initiates $\tot = \log^3n$ tokens. Each token, initially assigned a zeroed $\textsc{rwCounter}$, starts at node $u$ and includes a unique ID and the source node's ID.

The protocol restricts the number of tokens transmitted through each edge per round to $\capa = a \log^3 n $, with $a$ chosen large enough to control token flow. This is crucial to control the behavior of the Byzantine nodes, which can send an arbitrary number of tokens (and need not follow random walk protocol). Each honest node in the network is limited to sending and receiving up to $\capa$ tokens per round.  Each token can traverse a maximum distance of $ \rwLength = c \log n$ steps. The $\textsc{rwCounter}$ records the distance a token has traveled, incrementing after each step, and if it reaches $\rwLength$, the token stops and is marked as verified. 

 For $ c \log n $ steps, node $ u $ dequeues up to $\capa$ tokens from its $\outbox_u$ and distributes them by selecting a random neighbor $ v $ for each one and placing it in $ v $’s FIFO buffer, $\outbox_v$.  It may also receive up to $\capa$ tokens from its neighbors, storing them in a set $ M $. 
If any neighbor sends more than $\capa$ tokens, that neighbor is {\em blacklisted}, and further tokens from it are ignored. Each received token in $ M $ is processed: if its $\textsc{rwCounter}$ is less than $\rwLength$, it continues the random walk by selecting a new random neighbor. If the $\textsc{rwCounter}$ reaches $\rwLength$,  the token is marked as {\em verified}, and the token ID and current node ID are saved in the destination node. The verified token is then returned to its source (origin) node by retracing its path in reverse order (this will take an additional $\rwLength$ steps) --- Figure \ref{fig:nodet'}.   The source node uses its verified tokens to then subsequently make direct connections with the (respective destination) nodes that verified them. This verification crucially ensures that only nodes that sent tokens via random walks are eligible to make connections. This prevents Byzantine nodes (who know the addresses of all nodes in the full information model) from making arbitrary connection requests to honest nodes. Thus, even though Byzantine nodes can flood honest nodes with many connections, only connections made by nodes with a verified token will be entertained (note that the destination nodes record the source node's ID as well as its token ID).

\iffalse
\begin{algorithm}
\SetAlgoLined
\ForEach{node in the network}{
    Every round, generate  $\alpha$ (new) tokens each with TTL of $c \log n$\ are generated\;\\
    \For{Each of these tokens}{
    Choose a uniform random neighbor\\
    Forward the token
    }
    \If{token received}{
    \If{ 1 \leq TTL $\leq c \log n $}{
        Decrease the TTL by 1 \\
        Forward to random neighbor
    }
    \Else{
    Store it \Comment{Mature token }\\
        \If{Token stored for more than $c log n$ rounds}{
            Remove it
        }
    }
    }
}
\caption{Random Walk Sampling Protocol}
\end{algorithm}
\fi

\section{Protocol Analysis}

In this section, we analyze the P2P construction protocol and show that it builds and maintains
a large bounded-degree expander network among the honest nodes. The protocol crucially uses the Byzantine Random Walk Protocol to sample nodes nearly uniformly at random, which in turn is used to make new connections.

\subsection{Analysis of Byzantine Random Walk Protocol on Dynamic Graphs}

In this section, we make the assumption that at  time $t = \Theta(\sqrt{n})$, with high probability, there is a large subgraph (of size $|V_t| - o(|V_t|)$) of $G_t$ consisting of only honest nodes that is constant-degree {\em expander} graph. We show in the next section that our P2P protocol maintains this invariant over time. We will later prove that the above assumption  holds when  $t  =\Theta(\sqrt{n})$.

We note that when $t = \Omega(n)$, the network size would have stabilized, i.e., $|V_t| = \Theta(n)$ with high probability (in $n$) (cf. Lemma \ref{lem:sizeofg}). Hence, in the long run,  we will show that there is a large expander subgraph of honest nodes of size $n-o(n)$ in $G_t$, with probability at least $1-1/n$.

%Hence, in the long run, 
%In this scenario,  will hold with high probability in $n$, where $n$ is the stable network size.
%Nevertheless, our analysis below will apply to all $t$, though the probability bounds will be high probability in the size of the network, i.e., $|V_t|$.

For analysis, we define a subgraph over the set of dynamic graphs called the {\em core} corresponding to a phase of our Byzantine random walk protocol. A phase refers to a single iteration of our protocol as defined below.  

%[trim={left bottom right top},clip]
\begin{figure}
    \centering
    \includegraphics[trim={0 0.25cm 0 0.5cm},clip, width=0.75\linewidth]{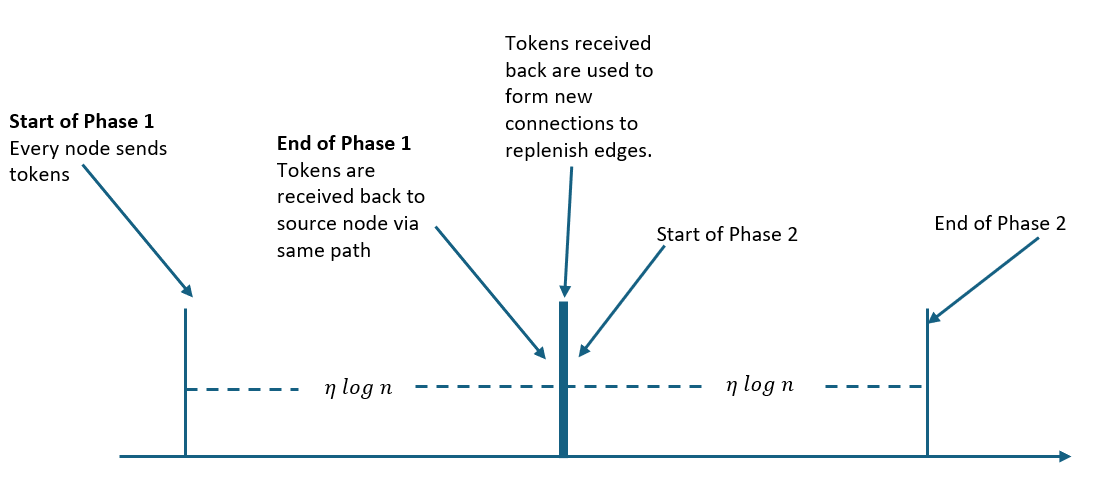}
    \caption{The figure illustrates how the tokens are being sent in the starting of a phase and reach a random node via random walk and sent back to the source node via the same path. The total time taken is $\eta \log n > 2\rwLength $ steps, for a suitably large constant $\eta$ given by the Byzantine Random Walk Theorem (item 3 in Theorem \ref{thm:ByzSamplingFinal}). At the end of a phase, these tokens are used to form new connections to replenish edges of a node where the formation of these new edges is instantaneous.}
    \label{fig:nodet'}
\end{figure}

\begin{definition}[Phase, denoted by $P_{t',t}$]\label{def:phase}
    Consider the Byzantine Random Walk Protocol. We define a phase as one iteration of the for loop (at line 1 from Algorithm \ref{alg:byzantineSamplingSparse}) in the protocol, such that the iteration begins at a integer time $t'$ and ends at time $t > t'$, where $t-t' = \eta \log n \geq 2\texttt{rwLength}$ steps (see Figure \ref{fig:nodet'}). This is the time interval that it takes for a random walk initiated by a node at time $t'$ to reach its destination $\texttt{rwLength}$ steps and return back (if successful). 
\end{definition}

We assume that there exists a large expander subgraph of honest nodes in $G_t$, at any time $t' < t$.
(We will later show that our P2P protocol maintains this invariant over time inductively.)

\begin{definition}[Core Subgraph of phase $P_{t',t}$, denoted by $\core$.] \label{def: core}
    Fix a phase $P_{t',t}$ (see Figure \ref{fig:core}). We assume that there exists a core subgraph $\core = (V_{\core}, E_{\core})$ is defined as the {\em expander}  subgraph of {\em honest} nodes of $G_{t'} - B_{t'} - D_{t',t}$ of size at least $|V_t'|-O(|B_{t'}|) - O(|D_{t',t}|)$, where $B_{t'}$ is the set of Byzantine nodes
    and $D_{t',t}$ is the set of nodes (including, possibly, Byzantine nodes) that joined and left the network in the time interval $[t',t]$.
    Since $t-t' = \Theta(\log n)$, it follows that $D_{t',t} = \Theta(\log n)$ (cf. Lemma \ref{lem:arrival}),
    and $B_{t'} = o(|V_t'|/\log(|V_t'|))$. Hence, the core subgraph is of size at least $|V_t'|- o(|V_t|/\log(|V_t'|)) - O(\log n) = |V_t| - o(|V_t|)$.
    
    %after accounting for all node departures and Byzantine behavior during the phase $P_{t',t}$. Specifically:
    %\begin{itemize}
     %   \item $V_{\core}$ consists of the subset of nodes in $G_{t'}$ that remain connected despite node departures and Byzantine influence, such that $|V_{\core}| \geq n - O(|B|)$.
     %   \item $\core$ remains an expander with conductance at least $\phi_{\core}$.
    %\end{itemize}
\end{definition}

\iffalse
The existence and size of the core is given by the following Lemma.

\begin{lemma}
    Assuming that that is an expander, there exists a subgraph $\core = (V_{\core}, E_{\core})$ of $G_{t'}$ as defined above that is of size at least $n - O(B_{t'}) - O(D)$, where $B_{t'}$ is the set of Byzantine nodes at time $t'$,
    and $D$ is the set of nodes that joined and left the network in the time interval $[t',t].$
    $\core = (V_{\core}, E_{\core})$ is defined as the {\em expander}  subgraph of $G_{t'} - B_{t'} - D_{t',t}$ of size at least $|V_t'|-O(B_{t'}) - O(D_{t',t})$, where $B$ is the set of Byzantine nodes
    and $D_{t',t}$ is the set of nodes that joined and left the network in the time interval $[t',t].$
\end{lemma}

\begin{proof}
From Lemma \ref{}, since a phase is of $O(\log n)$ rounds, it follows that $D = O(\log n)$ with high probability. Also, $B_{t'} = o(|V_t'|/\log(|V_t'|))$.
As discussed in Section \ref{subsec:defCore}, the existence of such a size expander is guaranteed (by Lemma \ref{} in \cite{}).
\end{proof}
\fi

\begin{figure}
    \centering
    \includegraphics[trim={0 0 0.1cm 0},clip, width=0.99\linewidth]{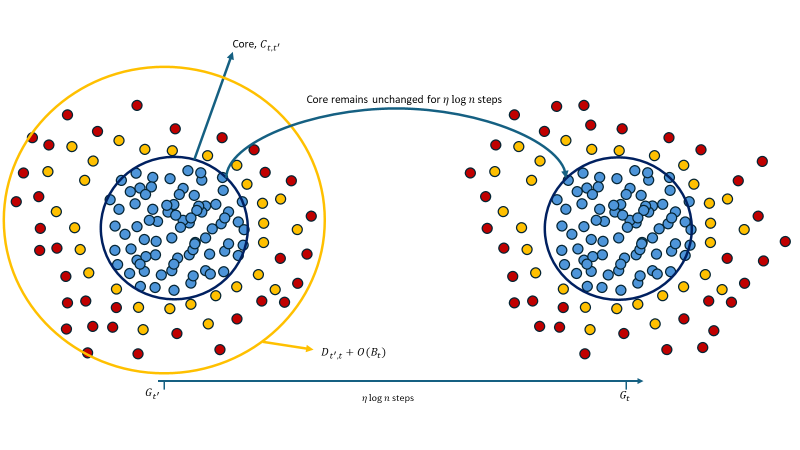}
    \caption{The figure illustrates the core subgraph in phase $[t', t]$ where $t-t' = \eta \log n$. The core nodes are shown in blue, the Byzantine nodes are shown in red, and the honest boundary nodes are shown in yellow. The core subgraph $C_{t,t'} = V_t - D_{t',t} - O(B_t')$, where $D_{t',t}$ is the set of all nodes that left or joined in the interval $[t',t]$ and $B_t'$ is the set of Byzantine nodes at time $t'$.  Byzantine Random Walk Theorem shows that most random walk tokens started from nodes in the core in $G_{t'}$ ends up in a near-uniform random node in the core, and the verified token returns to the source node in $G_t$. Further, the core is static during an entire phase.}
    \label{fig:core}
\end{figure}

\iffalse
\begin{definition}[Core Subgraph of phase $P_{t',t}$, denoted by $\core_{t',t}$.] \label{def: core}
    Fix a phase $P_{t',t}$. Let $\core_{t',t} = G_{t{\prime},t} - O(|B|)  $, where $O(|B|)$ is the set of Byzantine nodes in the network and the additional boundary nodes related to the Byzantine behavior. Since $G_{t'}$ is an expander, we define the core subgraph $\core_{t',t} = (V_{\core}, E_{\core})$ as the largest expander subset of nodes in $G_{t',t}$, after removing all nodes related to Byzantine behavior, or
    $$ \core = G_{t',t} - O(|B|),$$ 
    where $O(|B|)$ is for the set of Byzantine nodes in the network and the additional boundary nodes related to Byzantine behavior at the beginning of phase $P_{t',t}$.
\end{definition}

\begin{definition}[Core Subgraph of phase $P_{t',t}$, denoted by $\core$.] \label{def: core}
    Fix a phase $P_{t',t}$. Let $G_{t',t} = G_{t'}\cap G_{t'+1}\cap ...\cap G_{t} $. Then, we define a subgraph $\core= (V_{\core}, E_{\core})$ consisting of only the {\em honest} nodes in $G_{t',t}$ that don't have any Byzantine neighbors. Suppose $B_{t'}$ is the set of Byzantine nodes in the network at the beginning of the phase and $\delta_{out}(B_{t'})$ is the set of boundary nodes of $B_{t'}$, then: 
    $$
        \core & = G_{t',t} - B_{t'} - \delta_{out}(B_{t',t}) 
    $$
\end{definition}
\fi

Note from the definition \ref{def: core}, the core set $\core$ will be fixed throughout phase $P_{t',t}$. Thus, we reduce the dynamic graph to have a static core subgraph over which we can analyze the random walks. Hence, we can appeal to Theorem 2.1 of \cite{soda25}  to obtain the following result. This theorem says that most random walks started by honest nodes in the core subgraph will walk only in the core subgraph and will mix nearly uniformly at random in the core.
Specifically, most of the $\tot$ tokens created by most of the honest nodes in core will walk
$2\texttt{rwLength}$ steps in the core (time taken to reach its destination in $\texttt{rwLength}$
steps and then walk back in reverse after verification). Furthermore, the number of steps taken to complete 
these $2\texttt{rwLength}$ steps will be $O(\log n)$ with high probability. Further, each of these tokens has a nearly uniform probability
of $\Theta(1/|V_t|)$ of reaching any particular node in the core.

\begin{theorem}[Byzantine Random Walk Theorem --- adapted from \cite{soda25}]\label{thm:ByzSamplingFinal}
%Let $G_{t',t}$ be an expander graph with $o(n/\log n)$ subset $B$ of nodes are Byzantine. 
Let $\core$ be the core subgraph of $G_{t'}$ as defined in Definition \ref{def: core}, with a mixing time of $\tau_{C_{t',t}} = b\log n$. Suppose each (honest) node in $C_{t',t}$  initiate at most $\tot$ tokens and sends them in batches of $\capa = \Theta(\log^{3} n)$ via independent random walks for $\rwLength = 2\tau_{C_{t',t}}$ rounds using Algorithm \ref{alg:byzantineSamplingSparse}. Let $R(\core)$ denote the total number of tokens initiated by the nodes in the core $\core$. Then, for $\kappa = ((|B_t| + |D_{t',t}|) \log n)/|\core|$, the following statements hold whp:
\begin{enumerate}
    \item At most $O(\kappa \cdot  \tot \, |\core|)$ tokens enter or leave the core.

    \item At least $R(\core) - O(\kappa \cdot  \tot \, |\core|)$ tokens walk only in the core $\core$. Moreover, all of these tokens walk at least $\tau_C$ steps whp, and they finish their walks in at most $O\left(\frac{ \tot}{\capa}\tau_C\right) = \eta \log n$ rounds, where $\eta$ is a large enough constant.

    \item Furthermore, the probability that each such token ends at any given node $u \in \core$ is 
    $$\frac{deg_{\core}(u)}{2|E_{\core}|} \pm \frac{1}{|V_t'|^3} = \Theta(1/|\core|) = \Theta(1/|V_t'|),$$
    where $deg_{\core}(u)$ is the degree of node $u$ restricted to the core subgraph $\core$ and $E_{\core}$ is the edge set of $\core$.
    \item Finally, the verified token corresponding to each such token that ended at some $u \in C_{t',t}$ is successfully received back at $v$.
\end{enumerate}
\end{theorem}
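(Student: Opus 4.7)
The plan is to reduce the claim to the static Byzantine Random Walk Theorem of \cite{soda25} by first establishing that the core subgraph $\core$ is effectively \emph{static} throughout the phase $P_{t',t}$, and then handling the extra step of returning the verified token. The key observation enabling this reduction is that by Definition \ref{def: core}, $\core$ excludes every node that joins or leaves the network during $[t', t]$, so the induced subgraph on $V_{\core}$ does not change from time $t'$ to time $t$. Combined with the assumption that $G_{t'}$ contains a large honest expander subgraph, and the churn bound from Lemma \ref{lem:arrival} which gives $|D_{t',t}| = O(\log n)$ whp and the Byzantine bound $|B_{t'}| = o(|V_{t'}|/\log|V_{t'}|)$, we obtain $|\core| = |V_{t'}| - o(|V_{t'}|) = \Theta(|V_{t'}|) = \Theta(n)$ once the network has stabilized.

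For \textbf{part (1)}, I would count the edges on the cut between $\core$ and $V_{t'} \setminus V_{\core}$. Since $G_{t'}$ has bounded degree, this cut has $O(|B_{t'}| + |D_{t',t}|)$ edges. In each round, each such edge can carry at most $\capa = O(\log^3 n)$ tokens (honest nodes blacklist neighbors exceeding this, and Byzantine nodes sending more are ignored). Summing over the $\eta \log n$ rounds of the phase and over the $\tot / \capa$ token batches, the total number of tokens that can cross the cut is $O(\kappa \cdot \tot \cdot |\core|)$ with $\kappa = ((|B_{t'}| + |D_{t',t}|)\log n)/|\core|$, as claimed. For \textbf{part (2)}, the good tokens are exactly those never crossing the cut, of which there are at least $R(\core) - O(\kappa \cdot \tot \cdot |\core|)$. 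Because these tokens only traverse edges inside the static subgraph $\core$, their evolution is governed by the standard simple random walk on $\core$; the congestion argument of \cite{soda25} (applied to the static $\core$) shows that all of them complete $\tau_\core = b \log n$ walking steps in at most $2\tau_\core$ rounds whp, and the outbound and return traversals together fit within $\eta \log n$ rounds for a sufficiently large constant $\eta$.

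For \textbf{part (3)}, since the good tokens perform $\tau_\core$ steps of a simple random walk on the static expander $\core$, the distribution on the end vertex is within $1/|\core|^3$ (in $\ell_\infty$) of the stationary distribution $\pi(u) = \deg_\core(u)/(2|E_\core|)$ by the definition of mixing time given in Section \ref{subsec:defCore}. Since $\core$ has bounded degree and $|\core| = \Theta(|V_{t'}|)$, this stationary probability is $\Theta(1/|\core|) = \Theta(1/|V_{t'}|)$, giving the desired near-uniform sampling probability up to a $1/|V_{t'}|^3$ additive error. Finally, for \textbf{part (4)}, a good token terminates at some $u \in \core$ and stores its full reverse path (all intermediate nodes lie in $\core$ by definition of ``good''). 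Because $\core$ is static for the entire phase, every node on the reverse path is still present and honest when the verified token is sent back, so the reverse traversal is just another random walk constrained to $\core$, and the same congestion analysis guarantees it completes within the remaining $\tau_\core$ rounds whp.

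The main obstacle I anticipate is the bookkeeping for \textbf{part (4)}: while the forward walk's congestion bound in \cite{soda25} is stated for tokens originating at the source, the reverse verified tokens now appear in batches at all destination nodes in $\core$, and one must argue that the aggregate load on each honest edge during reverse traversal still stays within $\capa$. This should follow because the reverse tokens at any node $u$ are in bijection with good forward tokens that ended at $u$, and by part (3) these are distributed proportionally to $\deg_\core(u)$; hence the expected reverse load on any edge is comparable to the forward load, and a Chernoff/union bound gives the congestion guarantee. The remaining calculations (the exact constants in $\eta$, the $1/|V_{t'}|^3$ error, and the careful union bound over tokens) are routine once the static-core reduction is in place.
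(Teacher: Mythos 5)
Your proposal follows essentially the same route as the paper's own proof (given in the appendix): reduce to the static core subgraph, bound the cut-crossing tokens by the $O(|B_{t'}|+|D_{t',t}|)$ cut edges times $\capa$ per round, run the congestion/Chernoff argument of \cite{soda25} to guarantee at least $\tau_{C_{t',t}}$ genuine walk steps within the phase, invoke mixing on the static expander core for the near-stationary endpoint distribution, and use the staticness of the core for the successful return of verified tokens. The only notable difference is your explicit attention to congestion during the reverse traversal, a point the paper leaves implicit; your bijection-plus-Chernoff sketch for it is sound and compatible with the paper's argument.
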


Note that in the Byzantine Random Walk Theorem, we established that at most 
$O(\kappa \cdot \tot \cdot |C_{t',t}|$) tokens either enter or exit the core subgraph $C_{t',t}$, where $\kappa$ serves as an upper bound on the fraction of such tokens among the total number of tokens initiated by nodes in the core subgraph.

\subsection{Analysis of Node Joining in P2P Protocol}

%We assume that the byzantine random walk protocol on an expander always gives a randomly sampled node. In this section, we will show that our protocol \ref{alg:p2pConstr} in fact maintains an expander throughout the lifetime of the system for $t > 3n$.

\iffalse
To show this, we will take an induction approach. We make another simplifying assumption on the time steps. We modify Definition \ref{infinite sequence} to only count the joins in our system to define the discrete steps.

\begin{definition}
Let $ \{V_t: t \in \mathbb{R}^+$ be a Poisson node churn as defined in Definition \ref{def:poisson_node_churn}. We define the infinite sequence of random variable \textit{steps} $\{T_r : r \in \mathbb{N}\}$ (with parameters $\lambda$ and $\mu$) as follows: 
$$
    T_0 = 0 \quad \text{ and } \quad T_{r+1} = \inf\{t > T_r : V_t \neq V_{T_r} \text{ and } |V_t - V_{T_r}| = 1\}, & \text{ for } r = 0, 1, 2, \ldots. 
$$
\end{definition}
 we first obtain some helper lemmas, specifically looking at what happens in our protocol when a new node joins or when an old node departs.

\fi

We first prove a lemma that shows that a random sample from the \textit{nodesList} (maintained by the Entry Manager) is near-uniform 
random sample from the (current) network.

\begin{lemma}
\label{lem:entry-sample}
    Let $x \in V_t$ be a node (alive) in the network at any time $t = \Omega(\log n)$ and $r$ be a random sample from the \textit{nodesList}. Then, $\Pr(r = x) = \Theta\left( \frac{1}{|V_t|} \right)$.
\end{lemma}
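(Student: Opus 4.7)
The plan is to split the argument by the size regime of $t$.

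\textbf{Regime $\Omega(\log n) \leq t \leq n$.} By Lemma~\ref{lem:arrival}, the total number of arrivals by time $t$ is $\Theta(t) \leq O(n)$ whp, so the list has not yet triggered any removal and contains every past arrival. Hence any $x \in V_t$ must have arrived and is therefore in \textit{nodesList}, giving
\[
  \Pr(r = x) \;=\; 1/|\textit{nodesList}| \;=\; \Theta(1/t) \;=\; \Theta(1/|V_t|)
\]
using Lemma~\ref{lem:sizeofg}.

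\textbf{Regime $t > n$.} By Lemma~\ref{lem:sizeofg}, $|V_t| = \Theta(n)$ and \textit{nodesList} has size exactly $n$ whp. Let $s$ be the (deterministic) arrival time of $x$. Two independent stochastic processes govern $x$'s status. First, $x$'s lifetime is exponential with mean $n$, so by Lemma~\ref{lem:expo},
$\Pr(x \in V_t \mid s) = e^{-(t-s)/n}.$
Second, whenever the list is saturated, each new arrival independently evicts $x$ with probability $1/n$ by uniform reservoir sampling. Since the number of arrivals in $(s,t]$ is $N(s,t) \sim \mathrm{Poisson}(t-s)$, the Poisson moment generating function yields
\[
  \Pr(x \in \text{\textit{nodesList}} \mid s) \;=\; \mathbb{E}\bigl[(1-1/n)^{N(s,t)}\bigr] \;=\; e^{-(t-s)/n}.
\]
Crucially, the lifetime of $x$ and the arrival stream after time $s$ are independent processes, so these combine to give
\[
  \Pr(r = x \mid x \in V_t,\, s) \;=\; \frac{\Pr(x \in \text{\textit{nodesList}} \mid s)}{n} \;=\; \frac{e^{-(t-s)/n}}{n}.
\]

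Since $x \in V_t$, its age $t-s$ is typically $\Theta(n)$: in the stationary regime of the $M/M/\infty$ queue the age of an alive node is itself exponentially distributed with mean $n$, so $e^{-(t-s)/n} = \Theta(1)$ with constant probability (and averages to $1/2$), yielding $\Pr(r = x) = \Theta(1/n) = \Theta(1/|V_t|)$. The hardest part is handling the dependence of the bound on the random age $t-s$: atypically old alive nodes would receive a smaller probability. I would address this either by marginalizing over the age distribution (producing $1/(2n)$), or by restricting attention to alive nodes of typical age $O(n \log n)$, which dominate the alive population whp by the exponential tail in Lemma~\ref{lem:expo}.
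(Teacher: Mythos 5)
Your proposal is correct and follows essentially the same route as the paper's proof: split into the pre-saturation regime (every arrival is still in \textit{nodesList}, so sampling is uniform over $\Theta(t)=\Theta(|V_t|)$ entries) and the saturated regime, where $\Pr(r=x)$ is decomposed as $\Pr(x\in\textit{nodesList})\cdot \Theta(1/n)$, with the survival probability controlled by the fact that an alive node's age is $O(n)$ with constant probability and each subsequent arrival evicts it with probability $1/n$. The only difference is cosmetic: you compute the survival and age terms exactly via the Poisson generating function and the exponential age distribution (yielding $\approx 1/(2n)$), whereas the paper settles for the cruder $(1-\Theta(1/n))^{\Theta(n)}=\Theta(1)$ bound, which suffices for the stated $\Theta(1/|V_t|)$ conclusion.
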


\begin{proof}
    Fix a time $t = \Omega(\log n)$ and $ x \in V_t$. Let $\mathcal{L}_t$ denote the \textit{nodesList} maintained by the Entry Manager at time $t$. Recall from Algorithm \ref{alg:entrymgr} that the \textit{nodesList} maintains up to $n$ nodes at any given time, where $n$ is the stable network size. Moreover, when a new node joins, it is added to $\mathcal{L}_t$ without removing any existing nodes in the \textit{nodesList} if $|nodesList| < n$, else a node is removed uniformly at random from the current list. 
    %Note that the arrival of our nodes is governed by Poisson distribution, and hence at any time the probability that two or more than two arrivals occur simultaneously at any given time is $0$. 
    To analyze the sampling probability of the node $x$, we consider the following two cases:
    \smallskip
    
    \noindent \textbf{Case 1: $t = o(n)$}. By Lemma \ref{lem:arrival}, the total number
    of nodes that arrived till $t$ is at most $n$ w.h.p., and hence the \textit{nodesList} is never full, resulting in every node arriving by time $t$ getting added to \textit{nodesList}. Furthermore, by Lemma \ref{lem:arrival}, the number of arrivals by time $t$ is $\Theta(t)$, w.h.p, for $t = \Omega(\log n)$. Also, by Lemma \ref{lem:sizeofg}, the number of nodes alive till time $\Theta(t) = |V_t|$, w.h.p.  Since $r$ is sampled randomly from the \textit{nodesList},  $Pr(r = x) = \frac{1}{\Theta(t)} = \Theta\left(\frac{1}{|V_t|}\right)$. 
    \smallskip
    
    \noindent\textbf{Case 2: $t = \Omega(n)$}. 
    By time $t$, the network has seen at least $\Theta(n)$ arrivals in expectation and w.h.p. Thus, \textit{nodesList} has $\Theta(n)$ entries.
    Also, by Lemma \ref{lem:sizeofg}, $|V_t| = \Theta(n)$.

    Thus, if $x$ is in the \textit{nodesList} at time $t$, it will be sampled
    with probability $(1/\Theta(n)) = \Theta(1/|V_t|)$.

    We show that with at least constant probability that $x$ will be in the 
    \textit{nodesList} at time $t$.

    Since $x \in V_t$, $x$ is still alive, and by Lemma \ref{lem:expo}, it 
    joined the network not more than $n$ time units ago with constant probability.
    During this time interval of $n$, by Lemma \ref{lem:arrival}, w.h.p. at most
    $\Theta(n)$ nodes arrived. We bound the probability that any of these nodes
    that arrived before $x$ could have replaced $x$ in the \textit{nodesList}.
    Note that an incoming node will sample a random node in \textit{nodesList} 
    and replace it.
    
     At any given time after $x$ is added to the \textit{nodesList}, the probability that $x$ is not replaced by a new (incoming) node in the \textit{nodesList} is $ 1 - \Theta(1/n)$. Since
     at most $\Theta(n)$ new nodes arrived since $x$ arrived,
    $Pr(x\in \mathcal{L}_t) = \left(1 - \Theta(1/n)\right)^{\Theta(n)}= \Theta(1).$

     Since $r$ is chosen uniformly at random from $\mathcal{L}_t$, 
        $$\Pr(r = x) = \Pr( x \in \mathcal{L}_t) \cdot \Pr( r = x | x \in \mathcal{L}_t) 
         = \Theta(1) \cdot \Theta(1/n) = \Theta\left( \frac{1}{|V_t|}\right).$$ 
    
    \end{proof}

% \begin{lemma}
%     At any time $t$, the probability of a new node connecting to at least $8$ honest nodes that are alive in $V_t$ (when attempting to establish connections with $24$ candidate nodes) is at least a constant in one connection attempt, and is at least $1 - 1/\text{poly}(n)$ in $\bigO(\log n)$ connection attempts, where $n$ is the stable network size.
% \end{lemma}

The next lemma shows that most of the nodes in the \textit{nodesList} at time $t$
are alive at time $t$.

\begin{lemma}
\label{lem:alive}
Any node in the \textit{nodesList} at time $t$ has at least a constant probability of being
alive in $V_t$.
\end{lemma}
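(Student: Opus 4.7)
The plan is to show that for a node $x$ drawn from the \textit{nodesList} at time $t$, its age (time since joining) is $O(n)$ with constant probability, and therefore by the exponential holding-time property (Lemma \ref{lem:expo}) the node is alive at time $t$ with constant probability. The key point is that the two random processes governing (i) removal from the \textit{nodesList} and (ii) the node's death in the network are \emph{independent}: the Entry Manager removes a uniformly random list element only when a new node arrives, and it has no knowledge of which nodes have left. This allows us to decouple the two events.

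I would split into two cases. First, if $t = O(n)$, then by Lemma \ref{lem:arrival} at most $O(n)$ nodes have arrived by time $t$ (w.h.p.), so the list is never full and is never shrunk. Every node $x$ in the list joined at some $t_{\text{join}} \in [0,t]$, so its age is at most $t = O(n)$; applying $\Pr(H > s) = e^{-s/n}$ gives $\Pr(x \text{ alive at } t) \geq e^{-t/n} = \Omega(1)$. Second, for $t = \Omega(n)$, the list is full of size $n$ w.h.p., and each new arrival (rate $1$) evicts a uniformly chosen element, so for a node that entered the list at time $t - s$ the probability of still being in the list at time $t$ is $(1-1/n)^{N(t-s,t)} = e^{-s/n}(1 \pm o(1))$ w.h.p.\ by Lemma \ref{lem:arrival}.

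This lets me write down the age density of a uniformly random member of the list at time $t$: it is proportional to the arrival rate times the survival-in-list probability, i.e.\ proportional to $e^{-s/n}$, which after normalization over $s \in [0,\infty)$ gives the density $\tfrac{1}{n}e^{-s/n}$. Since death in the network is independent of the list dynamics, conditioned on age $s$ the node is alive with probability exactly $e^{-s/n}$. Therefore
\[
\Pr(x \text{ alive at } t \mid x \in \text{nodesList at } t)
\;\geq\; \int_{0}^{\infty} \tfrac{1}{n} e^{-s/n}\cdot e^{-s/n}\, ds
\;=\; \tfrac{1}{2},
\]
up to $o(1)$ correction terms. This yields the constant probability claimed.

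The main obstacle I anticipate is the careful justification that the ``age density'' argument above is legitimate --- in particular that the list occupancy reaches its stationary regime by time $t = \Omega(n)$ and that the standard concentration results (Lemmas \ref{lem:arrival} and \ref{lem:sizeofg}) can be invoked to replace the stochastic number of arrivals $N(t-s,t)$ by $s$ in the survival expression. A secondary subtlety is that the lemma is stated for ``any'' node in the list, whereas the natural integral above computes the probability for a uniformly random member of the list; but this is exactly what is needed for the downstream use in Lemma \ref{lem:entry-sample}, and for a fixed node that is in the list, the same symmetry argument applies since the Entry Manager treats every element of the list identically.
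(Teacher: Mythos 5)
Your proposal is correct and follows essentially the same route as the paper: decouple the Entry Manager's uniform-eviction process from the node's lifetime (they are independent), observe that a node's sojourn in the \textit{nodesList} is on the $\Theta(n)$ timescale (roughly geometric/exponential with mean $n$, since each unit-rate arrival evicts it with probability $1/n$), and combine this with the exponential holding time of Lemma \ref{lem:expo} to conclude a constant survival probability. Your stationary age-density integral yielding the explicit constant $\approx 1/2$ is a quantitative refinement of the paper's coarser ``time in list is $O(n)$ with constant probability, lifetime exceeds $O(n)$ with constant probability'' argument, and your reading of ``any node'' as a uniformly sampled list member matches how the lemma is actually used downstream.
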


\begin{proof}
Consider a node $x$ in the \textit{nodesList} and assume that $x$ entered the network
at time $t$. Using a similar argument as in Lemma \ref{lem:entry-sample},
$x$ will be in the \textit{nodesList} for at most $\Theta(n)$ time with constant probability. Since the lifetime of a node is $\Theta(n)$ with constant probability,
$x$ will be in the current network for the duration it is in the \textit{nodesList}.
\end{proof}

Our next lemma is the main lemma of this section, which gives the probability  that a new incoming node succeeds in making at least $d$
honest connections to the network.

\begin{lemma}\label{lem:entry-manager}
    At any time $t$, for $d$ chosen to be a suitably large constant, the probability of a new node has least $d$ honest connections in $G_t$ is at least a constant in one connection attempt and is at least $1-1/n$ in $O(\log n)$ connection attempts.
    Furthermore, these honest connections are to nodes sampled almost uniformly at random from $V_t$.
\end{lemma}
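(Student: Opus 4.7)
The plan is to show that when the Entry Manager returns $3d$ candidates, a constant fraction of them are ``useful'' --- meaning alive, honest, and willing to accept an incoming request --- and then a Chernoff bound gives the single-attempt bound while repetition gives the $1 - 1/n$ bound.

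First I would invoke Lemma \ref{lem:entry-sample} to conclude that each of the $3d$ candidates returned by the Entry Manager is a near-uniform sample from $V_t$, i.e., for every $x \in V_t$ the sampling probability is $\Theta(1/|V_t|)$. Next I would argue that a constant fraction of nodes in $V_t$ are simultaneously (i) alive (by Lemma \ref{lem:alive}, each nodesList entry is alive with at least constant probability), (ii) honest (since $|B_t| = o(|V_t|/\log |V_t|)$, the Byzantine fraction is $o(1)$), and (iii) have incoming degree $d_{in} < 6d$, so that the ``exception for new node requests'' branch of Algorithm \ref{alg:p2pConstr} applies and the candidate accepts. For (iii) I would use a Markov-style counting argument on the honest-to-honest outgoing edges: since each honest node has $d_{out} \le 3d$, the contribution of honest-initiated edges to the incoming degrees of honest nodes sums to at most $3d|V_t|$, which forces at least a constant fraction of honest nodes to have honest-initiated incoming degree strictly below $6d$; the additional incoming load from Byzantine-initiated verified tokens is bounded by the $\capa$ and $\tot$ parameters of Algorithm \ref{alg:byzantineSamplingSparse}, together with $|B_t| = o(n/\log n)$, so it cannot saturate more than a vanishing fraction of the honest in-capacity.

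Combining these three facts, each of the $3d$ candidates is ``useful'' with some constant probability $p > 0$ (with mild negative correlation from sampling without replacement, which only helps concentration). Choosing $d$ a sufficiently large constant so that $3dp > d$ with slack, a Chernoff bound yields that at least $d$ of the $3d$ candidates translate into successful honest connections with at least constant probability $c$; this gives the single-attempt guarantee. For the $1 - 1/n$ bound, I would note that queries to the Entry Manager across different attempts use fresh random choices, so the failure probability over $k \log n$ attempts is at most $(1-c)^{k \log n} \le 1/n$ for a sufficiently large constant $k$. The near-uniformity of the resulting connections then follows because we conditioned a distribution that was already $\Theta(1/|V_t|)$-uniform on an event of constant probability, preserving uniformity up to a constant factor on the honest subset of $V_t$.

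The main obstacle is step (iii): ruling out that Byzantine activity (via verified tokens obtained through the Byzantine Random Walk Protocol) has saturated the $6d$ incoming capacity of nearly all honest nodes. Addressing this requires combining $|B_t| = o(n/\log n)$ with the per-phase bandwidth limits $\capa = O(\log^3 n)$ and $\tot = O(\log^3 n)$ of Algorithm \ref{alg:byzantineSamplingSparse}, and then using the inductive invariant (maintained by the outer analysis) that a large core expander of honest nodes exists at time $t$, so that the Byzantine-initiated load distributes across $\Theta(n)$ honest targets rather than concentrating on a few.
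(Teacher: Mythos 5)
Your proposal is correct and follows essentially the same route as the paper's proof: near-uniform Entry Manager samples (Lemma~\ref{lem:entry-sample}), aliveness with constant probability (Lemma~\ref{lem:alive}), the $o(1)$ Byzantine fraction, a capacity-counting argument (honest outgoing $\le 3d$ per node versus incoming capacity $6d$ per node) showing a constant fraction of nodes can still accept, hence constant per-candidate success and amplification over $O(\log n)$ attempts. The differences are minor: your Chernoff step is unnecessary and slightly miscalibrated (enlarging $d$ cannot make $3dp > d$ since both sides scale with $d$, but since $d$ is a constant even the all-succeed probability $p^{3d}$ is already a constant, so the single-attempt bound holds regardless), and your explicit worry about Byzantine-held verified tokens saturating honest incoming slots is a point the paper's proof does not address at all, relying only on the same outgoing/incoming capacity count --- your extra care there is reasonable, though your bandwidth-based bound would need the acceptance rule (at most $6d$ accepted connections per honest node per phase, chosen among verified requests) rather than the raw $\capa,\tot$ parameters to be made airtight.
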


\begin{proof}
    Fix a time step $t$. Let $u$ be the node that enters the network at time $t$, and $G_t$ be the network at time $t$. We note from steps 1-3 of Algorithm \ref{alg:p2pConstr} that a new node repeatedly attempts to connect to nodes $candidateList$ provided by the Entry Manager until it successfully connects to at least $d$ nodes in the network. We begin by analyzing the event that node $u$ connects to at least $d$ honest nodes in a single attempt. Recall that $u$ successfully connects to $d$ honest nodes in $G_t$ if the $candidateList$ consists of at least $d$ honest nodes that are part of $G_t$ and if at least $d$ of these honest nodes have incoming degree quota left to accept new incoming connections. %This reduces our problem to analyze two events: (1) A random node chosen by the Entry Manager is in $G_t$ (i.e., it is still alive); (2) A random node in $G_t$ is good and has incoming degree quota left to accept new incoming connections. 
    
    %, consisting of $3d$ candidate nodes chosen uniformly at random from the set of $n$ most recently joined nodes at time $t$. Then node $u$ requests to connect to nodes in 

    Each (honest) node in the network can have up to $3d$ outgoing connections and $6d$ incoming connections. Thus, the total number of incoming connections (among honest nodes) in the network is $d_{in} \leq 6d |V_t|$, and the total number of outgoing connections in the network is $d_{out} \leq 3d |V_t|$. Hence, at least $\frac 12$ fraction of nodes will have less than $6d$ incoming connections and thus will accept new incoming connection requests. So, when $u$ attempts to connect using $3d$ (outgoing) connections, the probability of selecting a {\em full} node (i.e., having $6d$ incoming degree) in any single connection attempt will be  $1/2$, assuming random selection. By Lemma \ref{lem:entry-sample},  the Entry Manager chooses $candidateList$ almost uniformly at random from the current network. Thus, each connection of $u$ is chosen almost uniformly at random from the current network. 
    Thus we note three facts --- (1) A node connects to a non-full node with constant probability;  (2) the fraction  of Byzantine nodes is $o(1)$; and (3) By Lemma \ref{lem:alive}, any candidate node is alive with constant probability. 
    
    Thus, the probability that a particular connection request of $u$ succeeds (i.e., connects
    to an alive, honest node that is not full) is at least a constant, say $\beta$.

    Thus, the probability that $u$ has at least $d$  connections to honest nodes in the network out of the possible $3d$ connections it tries is at least a constant.

     Therefore, the probability of node $u$ connecting to at least $d$ good nodes that are alive in a single connection attempt is at least $\Theta(1)$.  Hence, for $\Theta(\log n)$ connection attempts to the Entry Manager, we obtain a high success probability.

\end{proof}

\subsection{Analysis of Conductance}

In this section, we analyze and show the main guarantee of the P2P construction protocol,
namely, it maintains a large expander subgraph among honest nodes at all times.
We show this by induction on the number of {\em phases}.  We recall that the protocol (Algorithm \ref{alg:p2pConstr})
at the end of each {\em phase}, i.e., every $\eta\log n$ steps (where $\eta$ is a fixed constant, and $n$ is the stable network size),  drops old connections and makes new connections. 
We show that this keeps the conductance above a specific constant at all times (except possibly for a short initial phase), with high probability.  
The key lemma is given below. The lemma assumes that some time before stability is reached, say, $t_1 = \Theta(\sqrt{n})$, $G_{t_1}$ is an expander with high probability and then shows by induction that this expansion is maintained with high probability henceforth.\footnote{In the beginning of the stochastic churn process, say, when $t < \sqrt{n}$, we don't make any high probability claims on  $G_t$. This is not needed to show the desired result on $G_t$ in the long run.}  Subsequently, we will prove this assumption on $G_{t_1}$ and show that it is indeed an expander.

We need the following technical lemma for our conductance analysis (a proof is in the appendix).

\begin{lemma}[\cite{itcs24}]
\label{lem:probaInequality}
For any constant, $\Phi \in (0,1/10]$ and any two integers $n,s \geq 1$ such that $ s \leq n/2$, $\binom{n}{s} \binom{cs}{(1-\Phi) c s} (as/n)^{(1-\Phi) c s} \leq 1/n^2$ for large enough $n$ and some suitably chosen integer $c \geq 1$ and constant $a \geq 1$.
\end{lemma}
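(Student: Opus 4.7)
The plan is to bound each binomial coefficient by a standard exponential, collapse the left-hand side to a single expression of the form $\exp(s\cdot B(s/n))$, and then verify that this quantity is at most $1/n^2$ uniformly for $s\in\{1,\ldots,\lfloor n/2\rfloor\}$ by tuning $a$ and $c$ appropriately in terms of $\Phi$.

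First, I would apply the standard estimates $\binom{n}{s}\le (en/s)^s$ and $\binom{cs}{(1-\Phi)cs}=\binom{cs}{\Phi cs}\le (e/\Phi)^{\Phi cs}$. Setting $x = s/n\in [1/n,\,1/2]$ and taking logarithms, this gives
\[
\ln(\mathrm{LHS}) \;\le\; s\cdot B(x), \qquad B(x) \;=\; 1 + \Phi c\bigl(1+\ln(1/\Phi)\bigr) + (1-\Phi)c\ln a \;-\; \bigl[(1-\Phi)c-1\bigr]\ln(1/x).
\]
Here both $a$ and $c$ are still free; I will fix $a=1$ (the hypothesis only requires $a\ge 1$), which eliminates the middle term and simplifies the analysis.

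Second, once $c\ge 2$ the coefficient of $-\ln(1/x)$ is positive, so $B(x)$ is increasing in $x$ on $(0,1/2]$ and hence maximal at $x = 1/2$, where it equals $1 + \ln 2 + c\bigl[\Phi(1+\ln(1/\Phi)) - (1-\Phi)\ln 2\bigr]$. The univariate function $h(\Phi):=\Phi(1+\ln(1/\Phi)) - (1-\Phi)\ln 2$ has derivative $\ln(2/\Phi)>0$ and satisfies $h(1/10)<0$, so it is a strictly negative constant for every $\Phi\in(0,1/10]$; let $\beta=\beta(\Phi)>0$ denote the resulting slack so that, for $c$ chosen large enough depending on $\Phi$, one gets the uniform bound $B(x)\le -\beta$ for all admissible $x$. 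This already yields $s\cdot B(x)\le -2\ln n$ whenever $s\ge (2/\beta)\ln n$.

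Third, I would handle the remaining range $1\le s<(2/\beta)\ln n$ by exploiting the $\ln(1/x)=\ln(n/s)$ term more sharply. Its contribution to $s\cdot B(x)$ is $-[(1-\Phi)c-1]\,s\ln(n/s)$, which for $s\ge 1$ and $s\le n/2$ is at least $[(1-\Phi)c-1]\bigl(\ln n - O(\log\log n)\bigr)$ in absolute value; this dominates both the $O(s)=O(\log n)$ positive contribution coming from the constant part of $s\cdot B(x)$ and the required $2\ln n$ slack as soon as $c\ge 3/(1-\Phi)$. Taking $c$ to be any integer exceeding both $(2/\beta)$ and $3/(1-\Phi)$ completes the proof for $n$ large enough. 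The main obstacle is that the two extreme regimes impose opposing constraints on the constants: at $s\approx n/2$ one needs the ``volume'' term $(1-\Phi)c\ln 2$ in $B(x)$ to beat the combinatorial entropy $\Phi c\ln(1/\Phi)$, which is precisely why the hypothesis $\Phi\le 1/10$ is needed; while at $s=1$ one needs enough $\ln n$-savings from $(s/n)^{(1-\Phi)cs}$ to cancel the $\binom{n}{1}=n$ factor, which forces $c$ to be a sufficiently large integer depending on $\Phi$.
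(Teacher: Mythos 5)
Your proposal is correct (up to harmless constant bookkeeping) but follows a genuinely different route from the paper's. The paper splits into two regimes by the size of $s$ and uses two different binomial estimates: for $s=o(n)$ it bounds both coefficients by $\binom{y}{x}\le(ey/x)^x$ (estimating $\binom{cs}{(1-\Phi)cs}\le(e/(1-\Phi))^{(1-\Phi)cs}$), while for $s=\kappa n$ that crude bound is too lossy, so it switches to the Stirling/entropy bound $\binom{y}{x}\le 2^{yH(x/y)}$ and needs the estimates $H(1-\Phi)\le 2\sqrt{\Phi(1-\Phi)}\le 0.6$ together with $|\log((1+\varepsilon)\kappa)|\ge 0.9$, taking $a=1+\varepsilon$ close to $1$. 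Your key deviation is exploiting the symmetry $\binom{cs}{(1-\Phi)cs}=\binom{cs}{\Phi cs}\le(e/\Phi)^{\Phi cs}$, which is tight enough that a single elementary estimate works uniformly, collapsing everything to the univariate function $B(x)$ that is monotone in $x=s/n$; the linear regime then reduces to the sign of $h(\Phi)=\Phi(1+\ln(1/\Phi))-(1-\Phi)\ln 2$ at $\Phi=1/10$, which is exactly the point where the hypothesis $\Phi\le 1/10$ enters, and the entropy machinery disappears entirely. Your small-$s$ treatment (harvesting the $\ln(n/s)$ term to beat the $2\ln n$ slack) plays the same role as the paper's $s=o(n)$ case and is sound; note only that with $(1-\Phi)c-1=2$ exactly the bound is marginal, so the strict inequality in ``exceeding $3/(1-\Phi)$'' is genuinely needed (and your extra condition $c>2/\beta$ is superfluous --- the threshold $(2/\beta)\ln n$ constrains $s$, not $c$; fixing $\beta$ first, say $\beta=1$, removes the mild circularity in its definition). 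Finally, fixing $a=1$ is legitimate under the statement's ``some suitably chosen constant $a\ge 1$'' and mirrors the paper's own need for $a$ near $1$ (the bound is simply false for, e.g., $a>2$ at $s=n/2$), though it yields the weakest admissible choice; the paper keeps $a=1+\varepsilon$ as a small parameter instead.
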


\begin{lemma}\label{lem:conductancelemma} 
    Fix a time $t_1  = k\eta\log n = \Theta(\sqrt{n})$, for a phase number $k$, and assume that there exists a subgraph of honest nodes in $G_{t_1}$ with size at least $(1-o(1))|V_{t_1}|$ and conductance at least $\phi'$, where $\phi' >0$ is a fixed constant. 
    Then for all $t > t_1$, with probability at least $1 -1/|V_t|$, there exists a subgraph of honest nodes in $G_t$ with size at least $(1-o(1))|V_t|$ and conductance at least $\phi$, where $0 < \phi <\phi'$ is a fixed constant.
    In particular, after network stability is reached, say, for any $t \geq 3n > t_1$, there exists a subgraph of
    honest nodes in $G_t$ with size $(1-o(1))n$ and conductance at least $\phi'$.
\end{lemma}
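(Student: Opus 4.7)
The plan is to prove the lemma by induction on the phase number $k$, where a phase is the interval $[k\eta\log n, (k+1)\eta\log n]$ over which Algorithm \ref{alg:p2pConstr} drops and replenishes connections. The base case is the hypothesis at $t_1$, which asserts that $G_{t_1}$ contains an honest subgraph $H_{t_1}$ of size $(1-o(1))|V_{t_1}|$ and conductance at least $\phi'$. For the inductive step, assume that at the beginning of some phase $t = k\eta\log n \geq t_1$ there exists an honest subgraph $H_t$ with conductance at least $\phi$, for a fixed constant $\phi$ chosen strictly between $0$ and $\phi'$ so as to leave slack for per-phase degradation. I would show that, with probability at least $1 - 1/|V_t|^2$, the same holds at the end of the phase, and then a union bound over all phases (at most polynomially many in any polynomial-time window) together with control of intermediate integer times within each phase would give the claim for all $t > t_1$ with error probability at most $1/|V_t|$.

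The heart of the inductive step is to characterize the edges produced during one phase. By Lemma \ref{lem:arrival}, at most $O(\log n)$ nodes arrive and at most $O(\log n)$ nodes depart in $[t, t+\eta\log n]$ whp, so the core subgraph $C_{t,t+\eta\log n}$ from Definition \ref{def: core} exists, is static throughout the phase, and has size $|V_t| - o(|V_t|)$. Applying the Byzantine Random Walk Theorem (Theorem \ref{thm:ByzSamplingFinal}) to this core, all but an $o(1)$ fraction of honest core nodes each receive $\Omega(\tot)$ verified tokens whose endpoints are distributed with probability $\Theta(1/|V_t|)$ at each core node. Consequently, when at the end of the phase such a node uses entries from its \texttt{verifiedList} to install its (at least) $d$ new outgoing edges, each such edge points to an essentially independent and nearly uniformly chosen honest core node.

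Given this near-uniform random structure on the edges refreshed every phase, I would lower bound the conductance of $H_{t+\eta\log n}$ by a standard union-bound argument. Fix any candidate subset $S$ of $H_{t+\eta\log n}$ of size $s$ with $1 \leq s \leq |H_{t+\eta\log n}|/2$. The total number of new outgoing edges emanating from $S$ in the phase is at least $ds$, minus contributions from an $o(s)$ set of ``exceptional'' nodes whose sampling may be biased by Byzantine activity or by landing on one of the $O(\log n)$ transient nodes. Each remaining edge independently lies inside $S$ with probability at most $(1+o(1))\,s/|V_t|$, so by Chernoff bounds the probability that more than $(1-\Phi)ds$ of these edges stay in $S$, for a small constant $\Phi < \phi'$ tied to $d$, is at most $\binom{cs}{(1-\Phi)cs}\,(as/|V_t|)^{(1-\Phi)cs}$ for appropriate constants $a,c$. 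Summing this over all $\binom{|V_t|}{s}$ candidate sets and over $s$, and invoking Lemma \ref{lem:probaInequality}, yields a total failure probability at most $1/|V_t|^2$ per phase. For integer times strictly inside a phase, the $O(\log n)$ nodes that join or leave, together with their $O(d\log n)$ incident edges, can perturb any cut ratio only by $o(1)$, so the slack between $\phi$ and $\phi'$ suffices to preserve the bound at every integer time.

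The main obstacle will be making precise the claim that the new edges behave as if independent and uniform even though the underlying random walks traverse the dynamic graph together and can be influenced by Byzantine behavior. Overcoming this requires (i) isolating the $o(s)$ exceptional nodes in $S$ whose sampled destinations may be biased, either because their walks cross the boundary of the core or because their verified tokens end at a node that departs before the phase closes, and (ii) leveraging items (2)--(4) of Theorem \ref{thm:ByzSamplingFinal} to show that, conditioned on these exceptions, the remaining token endpoints are independent and near-uniform over the core. A secondary subtlety is that $\phi$ must be fixed once and strictly below $\phi'$: each phase can erode the conductance by a small amount because of the $o(|V_t|)$ nodes excluded from the core in any single phase, and the induction must absorb this loss uniformly across all $\mathrm{poly}(n)$ phases without drift.
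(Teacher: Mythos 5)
Your overall strategy is the paper's: induction over phases, the core subgraph $C_{t,t+\eta\log n}$ of Definition \ref{def: core} to cover times inside a phase, item (3) of Theorem \ref{thm:ByzSamplingFinal} to treat each refreshed outgoing edge as landing on a near-uniform core node, and then a union bound over all candidate sets $S$ of each size using exactly the bound $\binom{n_t}{q}\binom{|E_S|}{(1-\phi')|E_S|}(aq/n_t)^{(1-\phi')|E_S|}$ controlled by Lemma \ref{lem:probaInequality}. The ``main obstacle'' you flag (independence of the refreshed edges) is handled in the paper more simply than your exceptional-node decoupling: one only needs the \emph{conditional} bound $\Pr\bigl(X_i=1 \mid X_1=\cdots=X_{i-1}=1\bigr)\leq aq/n_t$ and the chain rule, so no independence claim or Chernoff bound is required. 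Your worry about conductance drift also dissolves for the same structural reason the paper exploits: the end-of-phase conductance is re-established from fresh randomness at a fixed constant (the paper keeps $\phi'$ at phase boundaries and only concedes $\phi<\phi'$ for the core within a phase), so it never degrades cumulatively. Minor point: for times strictly inside a phase, your ``the $O(\log n)$ departing/joining nodes perturb any cut ratio only by $o(1)$'' is false for small sets (a set of size $O(\log n)$ can lose its entire boundary); the correct device is the one you already cite, namely that the core is itself an expander subgraph by the expander-robustness property of Section \ref{subsec:defCore}.

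The genuine gap is the \emph{size} half of the invariant. Your inductive subgraph at the end of a phase is, in effect, the core plus its refreshed edges, but the core excludes every node that joined during the phase (as well as departures), so it loses $\Theta(\log n)$ nodes per phase and the loss is never repaid in your argument. Over the $\Theta(n/\log n)$ phases needed to reach the regime $t\geq 3n$ that the lemma explicitly targets, the accumulated deficit is $\Theta(n)$, destroying the claim that the honest subgraph has size $(1-o(1))|V_t|$. The paper closes this in its final step: the $\Theta(\log n)$ nodes that joined during the phase are folded into the expander subgraph using their Entry-Manager connections --- by Lemma \ref{lem:entry-manager} (via Lemmas \ref{lem:entry-sample} and \ref{lem:alive}) each new node gets at least $d$ connections to alive honest nodes sampled almost uniformly, so each lands in the core with constant probability, and one argues that the core together with these new nodes still has conductance at least $\phi'$ and size at least $|V_t|-f|B_t|$. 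Your proposal never invokes Lemma \ref{lem:entry-manager} or otherwise accounts for newly joined nodes (who, if they arrive mid-phase, do not even have verified tokens yet), so as written the induction cannot maintain the stated size bound.
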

    
\begin{proof}
     The proof is by induction on $t$. The base case is $t_1 = kc\log n$ which is true as per the premise of the Lemma. We assume that the lemma is true at the beginning of phase $k'$, $t = k'\eta\log n > t_1$ 
       and show that it is true at the beginning of phase $k'+1$,  $t' = (k'+1)\eta\log n = t + \eta\log n$. By a property of expanders, it will follow that the premise will also hold for times in the (open) interval $(k'\eta\log n,(k'+1)\eta\log n)$, i.e., {\em between} the time steps of this phase.

     By the induction assumption, there is a subgraph $H_t$ of $G_t$ consisting of honest nodes at the beginning of phase $k'$ (where $t = k'\eta\log n$)  with size at least $(1-o(1))|V_t| = |V_t| - f|B_t|$ (for a suitably large fixed constants $f$ and $|B_t| = o(|V_t|/\log |V_t|)$) and conductance at least $\phi'$, where $\phi' > 0$ is a fixed constant.

     We consider the {\em core} subgraph  $C_{t,t+\eta\log{n}}$  (determined by the phase $[t,t+\eta\log n]$) of $G_t$ from Definition \ref{def: core},
     consisting of only good nodes in the network that joined before time $t$ and will remain in the network till time $t+\eta\log{n}$. Such a core subgraph $C_{t,t+\eta\log{n}}$ (referred henceforth to as just ``core'' for the rest of the proof) exists in $G_t$ and can be taken to be  $H_t - D_{t,t+\eta\log n}$,
     where $D_{t,t+\eta\log n}$ is the set of nodes that joined or left the network in the time interval $[t,t+\eta\log n]$.  We note that by Lemma \ref{lem:arrival}, $|D_{t,t+\eta\log n}| = O(\log n)$, and hence
     the size of core  is $|H_t| - O(\log n) = |V_t| - f|B_t| - O(\log n)$, by our assumption on size of $H_t$ (in previous para). The existence of the core subgraph is due to the property 
     of expanders (cf. Section \ref{subsec:defCore}) and it will have conductance at least $0 < \phi \leq \phi'$ for (any) constant $\phi$. Thus, the existence of this core subgraph  $C_{t,t+\eta\log{n}}$
     proves the theorem for time steps {\em between} this phase. We next show that at the
     beginning of the next phase, i.e., at phase $k'+1$ and corresponding time $t' = (k'+1)\eta\log n = t + \eta\log n$, $G_t'$ maintains the invariant, i.e., there is exists a subgraph of honest nodes in $G_{t'}$ with size at least $|V_{t'}| - f |B_t'|$ and conductance at least $\phi'$. Thus, at the end of
     each phase, we maintain the size and conductance invariant of the expander subgraph which will imply the lemma.

     To prove the above, we consider $C_{t,t+\eta\log{n}}$. At the end of phase $k'$ (i.e., at time
     $t+\eta \log n$), the protocol drops some connections and makes new connections.

    %Recall from Lemma \ref{lem:sizeofg} that the number of nodes in the system is $[0.9n,1.1n]$ whp. Since the core consists of only the good nodes at time $t$ that do not depart by time $t+c\log{n}$. Let $D$ be the set of nodes that depart in the interval $[t,t+c\log{n}]$. Thus, 
    %\begin{align}\label{eq:sizeCore}
    %    |C_{t,t+c\log{n}}| &\geq |V_t| - |B| - |D| \geq |V_t| - \frac{|V_t|}{\log{n}} - O(\log{n}) \geq (1 - \epsilon)n 
    %\end{align}

    %\noindent where $1-\epsilon = 0.9 \left(1-\frac{1}{\log{n}} - \frac{O(\log{n})}{|V_t|}\right)$.
    
    As stated above, the size of core $C_{t,t+\eta\log{n}}$ is $|V_t| - f|B_t| - O(\log n)$ (note
    that $|B_t|=o(|V_t|/\log(|V_t|)$.)
    Consider any subset $S \subset C_{t,t+\eta\log{n}}$ of size $q$ for $q \in \left[1, \frac{|C_{t,t+\eta \log{n}}|}{2} \right]$ and thus $q \leq |V_t|/2$. Note from step 2 to 6 of Algorithm \ref{alg:p2pConstr}, after a node (enters and) connects to the network, the number of outgoing connections are in $[d,3d]$ edges. Hence, the number of outgoing edges from the set of nodes in $S$ is in the range $[dq, 3dq]$. By the Byzantine Random Walk Theorem (cf. Theorem \ref{thm:ByzSamplingFinal}, item (3)), each outgoing edge has a $\Theta(1/|V_t|)$
    probability of landing in any particular node in the core.
    
    Let $E_S$ be the set of outgoing edges connected by nodes in set $S$. Label these edges as $e_1,e_2, \ldots, e_{|E_S|}$. Note that there are two ways the edges in $E_S$ might be present in the network: a) a node in $S$ connects to a node in $S$, or b) a node in $S$ connects to a node outside the set $S$. Using this observation, we have the following: 
    $$|E_S| = |E(S,S)| + |E(S,\overline{S})|.$$
    where $E(U,V)$ denotes the set of edges with a node in set $U$ connected by an outgoing edge to a node in set $V$. We can rewrite the above equation as 
    $$|E(S,\overline{S})| = |E_S| - |E(S,S)|$$
    
    This reduces our aim to compute an upper bound on the number of outgoing connections  of nodes in $S$ connected to some node in $S$. Thus, using the definition of conductance, we are required to show that 
    $$|E_S| - |E(S,S)| \geq \phi' |E_S| \quad \text{implies} \quad (1-\phi')|E_S| \geq |E(S,S)|$$
    where $\phi' > 0$ is the desired conductance.
    
    We define a set of indicator random variables for these events, i.e., for each edge $e_i$, defining the indicator random variable $X_i$ as 
    $$
        X_i = \begin{cases}
            1 , \, \text{ if edge }e_i \text{ has both endpoints in }S \\
            0, \, \text{ otherwise}
        \end{cases}
    $$

    Then, $X = \sum_{i = 1}^{|E_S|} X_i$ is the number of edges in set $S$ connecting to another node inside $S$.  
    As mentioned earlier, each edge from a node $ v \in S$ has probability at most $O(|S|/|V_t|)$ to land in $S$ or 
    $\Pr(X_1 = 1) \leq  O(|S|/|V_t|) = aq/n_t$, for some constant $a$ and assuming $|V_t| = n_t$.

    Next, we condition on the event that $X_1 = 1, \ldots , X_{i-1} = 1$, resulting in a decrease in probability for $X_i$ having both end points in $S$, then for all $ i = 2, \ldots , |E|$
    $$ \Pr\left( X_i = 1 \bigg| \bigwedge_{j=1}^{i-1} X_{j} = 1\right) \leq aq/n_t$$

    \noindent Applying chain rule of conditional probability, we have
    \begin{align}\label{eq:chainRule}
        \Pr\left(\bigwedge_{j=1}^{i} X_j = 1\right) &= \Pr\left(X_i = 1 \bigg| \bigwedge_{j=1}^{i-1} X_j = 1\right)\cdot \Pr\left(\bigwedge_{j=1}^{i-1} X_j = 1\right)\nonumber\\ 
        &= \Pr(X_j = 1) \cdot \prod_{j=2}^i \Pr\left( X_j = 1 \bigg| \bigwedge_{k=1}^{j-1}X_k = 1\right)\leq \left( a/n_t\right)^i
    \end{align}

    \noindent Next, we bound the probability of $|E(S,
    S)| \geq (1 -\phi')|E_S|$.  To do so, we consider all subsets $S$ of size $q$ (of size at most $n_t/2$) in the core set $C_{t,t+\eta\log{n}}$.  Then, we have 
    \begin{equation*}
        \Pr\left(\exists S, |S| = q \text{ and } |E(S,S)| \geq (1 - \phi')|E_S|  \right) \leq  {{n_t}\choose{q}} {|E_S| \choose {|E_S|(1 - \phi')}} \left( aq/n_t\right)^{(1 - \phi')|E_S|}
    \end{equation*}
    
    % $$ \Pr\left(\exists S, |S| = q \text{ and } |E(S,S)| \geq (1 - \phi')|E_S|  \right) \leq  {{n_t}\choose{q}} {|E_S| \choose {|E_S|(1 - \phi')}} \left( aq/n_t\right)^{(1 - \phi')|E_S|}$$
    Since $|E_S| \leq  \delta q$, for some constant $\delta \geq 1$, we can appeal to Lemma \ref{lem:probaInequality} to bound the above probability to be $1/(n_t)^2$.

     Hence, by a union bound, over all set sizes $q \in [1, n_t/2]$, with probability at least $1-1/n^t$, the graph $G_t$ has a subgraph with conductance at least $\phi'$.

    Now we show that the size of the expander subgraph is as desired.  This is because the number of nodes that joined the network during this phase is $\Theta(\log n)$, which at least compensates
    for the reduced size of the core due to the leaving nodes that are excluded from the core. The new nodes connect to
    at least $d$  nodes supplied by the Entry Manager. From Lemma \ref{lem:entry-manager}, each such connection has a constant probability of connecting to the core. It is easy to show that any subset of the new nodes has conductance at least $\phi'$ for a suitably chosen $\phi'$. Since  the core subset has conductance at least $\phi'$, it follows that the core plus the set of
    $\Theta(\log n)$ new nodes has size at least $|V_t| - f|B_t|$ (for a suitable $f$) and has conductance at least $\phi'$.
     
\end{proof}

Next we show that the assumption made in Lemma \ref{lem:conductancelemma} can be satisfied. 

\begin{lemma}\label{lem:sqrtn}
    At time $t_1  = \Theta(\sqrt{n})$,  there exists a subgraph of honest nodes in $G_{t_1}$ with size at least $(1-o(1))|V_{t_1}|$ and conductance at least $\phi'$, where $\phi' >0$ is a fixed constant. 
\end{lemma}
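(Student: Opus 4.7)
}
The plan is to establish the required expansion at $t_1 = \Theta(\sqrt n)$ via an inductive argument that mirrors Lemma \ref{lem:conductancelemma}, bootstrapped at a much earlier time $t_0 = \Theta(\log n)$. At $t_0$ the network is small enough that I can construct the initial expander directly from the Entry Manager's random connections alone, bypassing any reliance on the Byzantine Random Walk Theorem (which itself would need a pre-existing expander to sample from). The inductive machinery of Lemma \ref{lem:conductancelemma} then carries expansion forward from $t_0$ through the $\Theta(\sqrt n/\log n)$ intermediate phases up to $t_1$.

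For the base case at $t_0 = \Theta(\log n)$, I would first use Lemmas \ref{lem:sizeofg}, \ref{lem:arrival}, and \ref{lem:expo} to establish that $|V_{t_0}| = \Theta(\log n)$, that essentially all arrivals in $[0, t_0]$ are still alive (since $t_0/n = o(1)$), and that $|B_{t_0}| = o(|V_{t_0}|/\log|V_{t_0}|)$, so the honest-alive set $H \subseteq V_{t_0}$ has $|H| = (1-o(1))|V_{t_0}|$. By Lemmas \ref{lem:entry-sample} and \ref{lem:entry-manager}, each $u \in H$ obtained at least $d$ Entry Manager outgoing connections at its join time $t_u$, each approximately uniformly sampled from $V_{t_u}$ with probability $\Theta(1/|V_{t_u}|)$ per target. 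I would then reproduce the chain-rule calculation from the proof of Lemma \ref{lem:conductancelemma}, but with Lemma \ref{lem:entry-sample} replacing item (3) of Theorem \ref{thm:ByzSamplingFinal} as the source of near-uniform sampling: for any $S \subseteq H$ of size $q \leq |H|/2$, the probability that $|E(S,S)| \geq (1-\phi')|E_S|$ is bounded by $\binom{|V_{t_0}|}{q}\binom{|E_S|}{(1-\phi')|E_S|}(aq/|V_{t_0}|)^{(1-\phi')|E_S|}$, which Lemma \ref{lem:probaInequality} bounds by $1/|V_{t_0}|^2$. A union bound over $q$ yields conductance $\geq \phi'$ with probability $\geq 1 - 1/|V_{t_0}|$. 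For the inductive step from $t_0$ to $t_1$, the Byzantine Random Walk Theorem now applies at every intermediate phase (since an expander exists), and the per-phase argument in the proof of Lemma \ref{lem:conductancelemma} maintains conductance $\geq \phi'$ after accounting for the $O(\log n)$ joins/departures per phase; iterating over the $\Theta(\sqrt n/\log n)$ phases and taking a union bound gives the claim.

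The main obstacle is verifying that Lemma \ref{lem:probaInequality} and the associated chain-rule calculation still deliver a useful bound at the small scale $|V_{t_0}| = \Theta(\log n)$ of the base case, since the constants in that lemma were tuned for large enough $n$. I expect this to force a careful choice of $d$ (and of the hidden constant in $t_0 = \Theta(\log n)$) so that the exponent $(1-\phi')dq$ dominates the combinatorial factors already at $n = \Theta(\log n)$, and so that Lemma \ref{lem:entry-sample}'s hypothesis $t = \Omega(\log n)$ is satisfied throughout the bootstrap interval. A secondary subtlety that I expect to be routine but worth noting is that the sampling probability for a fixed target subset $S \subseteq V_{t_1}$ must be bounded uniformly across the join times $t_u$ of $S$'s elements; a clean way to handle this is to restrict the analysis to late arrivals in a sub-interval (e.g. $[t_0/2, t_0]$ for the base case, and analogously per phase) during which $|V_t| = \Theta(|V_{t_1}|)$, so the per-edge probability $O(q/|V_{t_u}|)$ becomes $O(q/|V_{t_1}|)$ uniformly.
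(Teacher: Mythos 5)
There is a genuine gap in your plan, and it lies exactly where the paper deliberately avoids going. Your base case at $t_0 = \Theta(\log n)$ cannot deliver the probability guarantee the lemma is later used for: every bound in your bootstrap (Lemma \ref{lem:probaInequality}, the union bound over subset sizes, Lemma \ref{lem:arrival}) is polynomial in the \emph{current} network size, so at scale $|V_{t_0}| = \Theta(\log n)$ the failure probability is only $1/\operatorname{polylog}(n)$ --- for instance, with constant $d$, singleton sets $q=1$ already fail with probability $(\log n)^{-\Theta(d)}$, and no ``careful choice of $d$'' (a constant) turns this into $1/n^{\Omega(1)}$. Since this error cannot be amplified away by the subsequent induction, the final guarantee of Theorem \ref{thm:main} (probability $1 - 1/n^{\Omega(1)}$ for all $t \geq \Theta(\sqrt{n})$) would not follow. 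This is precisely why the paper makes no high-probability claims before $\Theta(\sqrt{n})$ (see its footnote to Lemma \ref{lem:conductancelemma}). A second, related problem is the inductive stretch from $t_0$ up to sizes around $\operatorname{polylog}(n)$: there the per-phase churn $|D_{t',t}| = \Theta(\log n)$ is \emph{not} a lower-order term relative to $|V_t|$, so the core $C_{t',t}$ of Definition \ref{def: core} (which excludes every node that joins or leaves during the phase) can lose a constant fraction --- or essentially all --- of the network, and neither the per-phase argument of Lemma \ref{lem:conductancelemma} nor Theorem \ref{thm:ByzSamplingFinal} applies as stated. Your proposed fix (restricting to late arrivals in a sub-interval) addresses non-uniformity of sampling probabilities, not this core-erosion issue.

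The paper's own proof sidesteps both problems by working directly at the scale $\Theta(\sqrt{n})$, with no earlier bootstrap and no induction: it takes $S_1$ to be the $\Theta(\sqrt{n})$ nodes alive at time $\sqrt{n}$ and $S_2$ the $\Theta(\sqrt{n})$ arrivals in $[\sqrt{n},2\sqrt{n}]$, observes via Lemma \ref{lem:expo} that departures in this window are negligible (each node leaves with probability $O(1/\sqrt{n})$, so only $O(\log n)$ nodes are lost whp), and then applies exactly the ingredient you also identified --- near-uniform Entry-Manager sampling (Lemmas \ref{lem:entry-sample} and \ref{lem:entry-manager}) combined with the counting argument of Lemma \ref{lem:conductancelemma} --- to show all subsets of $S_2$, and by symmetry of $S_1$, expand. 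At that scale the union-bound failure probability is polynomially small in $n$, which is what the statement needs. If you want to salvage your write-up, replace the $\Theta(\log n)$ bootstrap and the early-phase induction with this one-shot argument over the window $[\sqrt{n},2\sqrt{n}]$; your observation that the Entry Manager, not the random-walk theorem, must supply the randomness here is correct and is the same observation the paper relies on.
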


\begin{proof}
    This lemma can be shown by making use of two facts: (1) By Lemma \ref{lem:expo}, the number of nodes leaving the network is negligible with high probability during the time interval $[0,t_1]$, since $t_1 = o(n)$. In particular, by Lemma \ref{lem:expo}, the probability that any node that joined the network during this time interval left the network is $O(1/\sqrt{n})$. Thus, with high probability, except for at most $O(\log n)$ nodes, all nodes are alive. (2) By Lemma \ref{lem:entry-manager},  each new node
    connects to at least $d$ (alive)  honest nodes in the network (note that $d$ is a sufficiently large constant) chosen almost uniformly at random.
    
    We consider the time interval $[\sqrt{n}, 2\sqrt{n}]$. It follows
    from Lemma \ref{lem:sizeofg}, with high probability, there are $\Theta(\sqrt{n})$ nodes alive at time $\sqrt{n}$ (call this set $S_1$) and $\Theta(\sqrt{n})$ nodes joined the network during this interval (call this set $S_2$).  Since nodes in $S_2$ connect to nodes in $S_1$ almost uniformly at random, it is easy to show (similar to the argument in Lemma \ref{lem:conductancelemma}) that all subsets of $S_2$ have conductance at least $\phi'$ for a suitable constant $\phi'$. 
    By symmetry, for $d$ sufficiently large, we can show that all subsets in $S_1$ have conductance at least $\phi'$.

\end{proof}

We can state our main result which follows from the above lemmas. 

\begin{theorem}
\label{thm:main}
    Assume that the churn is controlled by a stochastic churn model of Definition \ref{def:poisson_node_churn}. Let $n$ be the stable network size.
    For any $t \geq \Theta(\sqrt{n})$, the P2P Construction Protocol, with  probability at least $1-1/n^{\Omega(1)}$ maintains  a graph $G_t = (V_t,E_t)$
    that has a constant-degree expander subgraph of size at least $(1-o(1))|V_t|$ while tolerating up to $o(|V_t|/\log(|V_t|))$ Byzantine nodes.
    In particular, after stability is reached, i.e., $t\geq 3n$, with probability at least $1-1/n^{\Omega(1)}$, maintains  a graph $G_t = (V_t,E_t)$
    that has a constant-degree expander  subgraph of size at least $(1-o(1))n$ while tolerating up to $o(n/\log n)$ Byzantine nodes.
\end{theorem}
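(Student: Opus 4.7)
The plan is to combine the two key lemmas already established, namely Lemma \ref{lem:sqrtn} and Lemma \ref{lem:conductancelemma}, together with the stable-size guarantee of Lemma \ref{lem:sizeofg} and the degree cap built into Algorithm \ref{alg:p2pConstr}. The first step is to invoke Lemma \ref{lem:sqrtn} to establish the base case: at time $t_1 = \Theta(\sqrt{n})$, w.h.p.\ there is an honest subgraph of $G_{t_1}$ of size at least $(1-o(1))|V_{t_1}|$ with conductance at least some fixed constant $\phi'$. This will serve as the hypothesis of Lemma \ref{lem:conductancelemma}.

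Next, I would apply Lemma \ref{lem:conductancelemma} starting from $t_1$. That lemma, by induction on the phase number, guarantees that for every $t > t_1$, with probability at least $1-1/|V_t|$, $G_t$ contains an honest subgraph of size at least $(1-o(1))|V_t|$ and constant conductance. Combined with a union bound over the polynomially many phases occurring from time $\Theta(\sqrt{n})$ up to time $\operatorname{poly}(n)$, the total failure probability remains at most $1/n^{\Omega(1)}$, which gives the first part of the theorem.

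For the bounded-degree claim, I would simply note that by construction (lines on outgoing/incoming connection caps in Algorithm \ref{alg:p2pConstr}), every honest node has at most $3d$ outgoing and at most $6d$ incoming connections, for a total degree at most $9d = O(1)$. Therefore the expander subgraph produced by Lemma \ref{lem:conductancelemma} is automatically a constant-degree expander. The Byzantine tolerance $|B_t| = o(|V_t|/\log|V_t|)$ is inherited directly from the hypothesis used throughout the analysis (in particular, the definition of the core in Definition \ref{def: core} and the Byzantine Random Walk Theorem, Theorem \ref{thm:ByzSamplingFinal}, which require this bound on $|B_t|$ to ensure that most random walks stay inside the core).

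Finally, for the stable regime $t \geq 3n$, I would invoke Lemma \ref{lem:sizeofg} which yields $|V_t| = \Theta(n)$ w.h.p., so the $(1-o(1))|V_t|$ bound becomes $(1-o(1))n$ and the Byzantine bound becomes $o(n/\log n)$. The main obstacle is essentially already handled inside Lemma \ref{lem:conductancelemma}, so here the task reduces to carefully stitching together the base case, the induction, and the union bound over phases, and verifying that the constants line up (the $\phi'$ from the base case feeds into the conductance maintained by each phase). No new conceptual ingredient is needed beyond what the preceding lemmas provide.
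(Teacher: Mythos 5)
Your proposal is correct and matches the paper's own (very brief) argument: the paper derives Theorem \ref{thm:main} directly by combining the base case of Lemma \ref{lem:sqrtn} with the inductive maintenance guarantee of Lemma \ref{lem:conductancelemma}, together with Lemma \ref{lem:sizeofg} for the stable regime and the constant degree cap built into Algorithm \ref{alg:p2pConstr}. Your added remarks on the union bound over phases and the inherited Byzantine bound are consistent elaborations of the same route, not a different approach.
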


\section{Conclusion}
In this paper, we have presented a randomized, fully distributed P2P construction protocol that,  with high probability, constructs and maintains a bounded-degree graph with high expansion in the presence of continuous churn and Byzantine nodes.  Our protocol can be used to enable a variety of distributed tasks in dynamic networks with Byzantine nodes, such as efficient routing, broadcasting, agreement and leader election.  
%As an application of our protocol, we showed how Byzantine agreement and leader election can be solved in a dynamic P2P network with a significant amount of Byzantine nodes.

Several questions remain open. First, while our protocol can tolerate up to $o(n/ \polylog n)$ Byzantine nodes, can we further increase the number of Byzantine nodes that can be tolerated? Second, can we extend our approach to an adversarial churn model (e.g.,\cite{podc13}) rather than the stochastic churn model assumed here?

% \newpage
%%
%% The next two lines define the bibliography style to be used, and
%% the bibliography file.

\paragraph{Acknowledgements.}
Gopal Pandurangan was supported in part by ARO grant W911NF-231-0191 and NSF grant CCF-2402837.

\bibliography{refs}

\newpage

\section{Appendix}

\subsection{Proof of Lemma \ref{lem:arrival}}
\label{sec:arrivalproof}

\begin{proof}
    In a Poisson process with rate $\lambda = 1$, the number of nodes arriving in any interval of length $t' - t$ is Poisson distributed with mean 
    $ \lambda (t - t') = 1 \cdot (t - t') = t - t'$. Hence, $E[N(t',t)] = (t' - t)  = n'$.

To prove concentration of $N(t',t)$ we use a Chernoff bound for the Poisson distribution \cite{upfal}. 
    
    The Chernoff bound for the Poisson distribution gives us:
        \begin{align*}
            Pr(|N(t',t) - n'| \geq \delta n') &\leq 2e^{-n' \delta^2 / 3}
        \end{align*}
        for any $0 < \delta < 1$.

        \iffalse
        We want to find $\delta$ such that $2e^{-\mu \delta^2 / 3} \leq \frac{1}{n^2}$. Solving for $\delta$, we get:
        \begin{align*}
            2e^{-\mu \delta^2 / 3} &\leq \frac{1}{n^2} \\
            e^{-\mu \delta^2 / 3} &\leq \frac{1}{2n^2} \\
            -\mu \delta^2 / 3 &\leq \ln\left(\frac{1}{2n^2}\right) \\
            \delta^2 &\geq \frac{3 \ln(2n^2)}{\mu} \\
            \delta &\geq \sqrt{\frac{3 \ln(2n^2)}{\mu}}
        \end{align*}
        \fi
        Plugging  $\delta = \sqrt{\frac{3 \ln(2n^2)}{n'}}$, we have the lemma.
\end{proof}

\subsection{Proof of Lemma \ref{lem:probaInequality}}

\begin{proof}
    Let $p^* = \binom{n}{s} \binom{cs}{(1-\Phi) c s} ((1+\varepsilon)s/n)^{(1-\Phi) c s}$. We give two upper bounds: the first for $s = o(n)$ and the second for $s = \kappa n$ for some constant $0 < \kappa \leq 1/2$.
    
    To get the first bound, we use the inequality $\binom{y}{x} \leq (ey/x)^x$, that holds for any integers $y \geq x \geq 1$. Then, $p^* \leq (en/s)^s (e/(1-\Phi))^{(1-\Phi) c s} ((1+\varepsilon) s/n)^{(1-\Phi) c s} = 2^{s(\beta + (1-(1-\Phi)c) (\log n - \log s))}$, where $\beta = \log(e) + (1-\Phi)c \log (e (1+\varepsilon) / (1-\Phi)) $. For large enough $n$, the $(1-(1-\Phi) c) \log n$ factor dominates in the exponent. Thus, 
    it suffices to choose $c$ large enough and it holds that $p^* \leq 1/n^2$ for large enough $n$.
    
    To get the second bound, we need to use a tighter inequality (since $s$ and $n$ are large) to bound the binomial coefficients. More concretely, using Stirling's formula, it holds that $\binom{y}{x} \leq 2^{y H(x/y)}$, for any integers $x, y \geq 0$ and where $H(q) = -q \log(q) - (1-q) \log (1-q)$ is the binary entropy of $q \in (0,1)$. As a result, we get $p^* \leq 2^{n ( H(\kappa) + c \kappa H(1-\Phi) + (1-\Phi) c \kappa \log ((1+\varepsilon) \kappa))}$. 
    First, 
    we take small enough $\varepsilon$ such that for any $\kappa \leq 1/2$, $|\log ((1+\varepsilon) \kappa)| \geq 0.9$. 
    Next, note that 
    $H(1-\Phi) \leq 2 \sqrt{\Phi (1-\Phi)} \leq 0.6$, where the first inequality is a well-known upper bound for binary entropy that holds for any $\Phi \in (0,1)$ and the second one holds because $2 \sqrt{x (1-x)}$ takes value $0.6$ at $x=1/10$ and increases between $x=0$ and $x=1/2$. Then, $H(1-\Phi) \leq |(3/4) (1-\Phi) \log ((1+\varepsilon) \kappa)|$, since the right-hand side is strictly greater than $0.6$ for $\Phi \leq 1/10$.
    Therefore, $p^* \leq 2^{n ( H(\kappa) + (c \kappa/4) (1-\Phi) \log ((1+\varepsilon) \kappa))}$. Next, we take $c$ large enough so that $|(c \kappa/8) (1-\Phi) \log ((1+\varepsilon) \kappa)| \geq 2 |\kappa \log \kappa|$. Then, $p^* \leq 2^{n ( \kappa \log \kappa - (1-\kappa) \log(1-\kappa) + (c \kappa/8) (1-\Phi) \log ((1+\varepsilon) \kappa))}$. Since $\kappa \log \kappa - (1-\kappa) \log(1-\kappa) \leq 0$ for $0 \leq \kappa \leq 1/2$ and the other term in the exponent is negative,  $p^* \leq 2^{n (c \kappa/8) (1-\Phi) \log ((1+\varepsilon) \kappa)}$. Finally, we use again $\log ((1+\varepsilon) \kappa) \leq -0.9$ to obtain $p^* \leq 2^{- b \cdot s }$, where $b = 0.9 c (1-\Phi) / 8$ does not depend on $\kappa$. Since $s = \Omega(n)$, it holds that $p^* \leq 1/n^2$ for large enough $n$. 
\end{proof}

\subsection{Proof of Theorem \ref{thm:ByzSamplingFinal}}

The proof of Theorem \ref{thm:ByzSamplingFinal} is an adaptation of the proof of the Byzantine Random Walk Theorem
of \cite{soda25} to the stochastic churn model with Byzantine nodes. For the sake of completeness, we give the full proof here.

\begin{lemma}[\cite{soda25}]\label{lem: coremixing}
    Fix a phase $P_{t',t}$. Let $\core $ be the core subgraph during this phase. Then, all random walks initiated at the start of $P_{t',t}$, and that walk only on $\core$, will walk at least $c \log n$ steps (whp) and hence will mix in $\core$, where $c$ is a large enough constant (depending on $\phi_{\core}$).
\end{lemma}
\begin{proof}
    Fix a phase $ P_{t',t}$. Consider a random walk that walks only in $\core$, or more precisely, that walks on nodes $u_1, u_2, u_3, \ldots, u_i, \ldots, u_t$ during the phase $P_{t',t}$, with $u_1$ being the node that initiated the random walk and $u_t$ being the node where it terminated. When the walk enters each $u_i$, there are at most $d \cdot \capa$ random walks that are allowed to enter (from all incident edges) because $u_i$ will discard any excess walks and blacklist any neighbor having sent more than $\capa$ tokens. 

    Then, the random walk is placed on $\outbox_{u_{i+1}}$ that was chosen randomly by $u_i$. Of course, every other walk (regardless of whether it was initiated by an honest node or a Byzantine node) is also placed randomly in one of the $d$ outboxes. Therefore, even assuming the full set of $d\cdot \capa$ walks arrived at $u_i$ along with (and including) the random walk, the number walks placed into $\outbox_{u_{i+1}}$ is a binomial random variable with parameters $d.\capa$ and $1/d$, thus having a mean of $\capa$. Importantly, when $\capa$ is $a\log^3n$, the probability that the number of walks placed into $\outbox_{u_{i+1}}$ will exceed $a\log^3n +(a/2b) \log^2 n = (1 + 1 /(2b \log n)) \cdot a\log^3 n$ is at most
    $e^{(a \log^3 n) \cdot (1/(4b^2 \log^2 n))/3} = n^{-1/(12b^2)}$
        by a  Chernoff bound. (Note that since $b \geq 1,1/(2b \log n) \leq 1$ for any $ n \geq 2)$. As a result, with probability at most $n^{-1/(12b^2)}$, the excess number of random walks placed into $\outbox_{u_{i+1}}$ (i.e., in addition to the mean $\capa$) is at most $(a/2b)\log^2 n$ per round. With $\rwLength = 2f = 2b \log n$, the number of such excess walks placed in $\outbox_{u_{i+1}}$ in the whole phase is at most $2b \log n \cdot (a /2b) \log^2 n = \capa$ with probability at most $n^{-a/(12b^2)}$, or in other words, whp for a constant $a$ chosen large enough compared to $b$. 

    With the excess smaller or equal to $\capa$ (whp), the random walk will buffer at $\outbox_{u_{i+1}}$ for at most one round before moving on to $u_{i+1}$. Thus, even if the random walk were unlucky and took two rounds at each node, in $2f$ steps, it would have taken the requisite $f$ random walk steps to ensure mixing. Thus, we can ensure that all random walks that only walked in $\core$ for $2f$ rounds will take at least $f$ random walk steps. 
\end{proof}
In the next lemma, we bound the number of edges in the core subgraph for a given phase. This hinges on the fact that due to the stochastic churn model, the graph does not change drastically over $\Theta(\log n)$ steps. 

\begin{lemma}\label{lem:sizeofcore}
    Fix a phase $P_{t',t}$, where $t' \geq 3n$ and nodes arrivals and departures are accounted for dynamically. Let $\core$ be the core subgraph for this phase. Then, $|{\core}|  = \Theta(n)$.
\end{lemma}
\begin{proof}% Consider vertices instead of edges 
    Fix a phase $P_{t',t}$. We first bound the number of nodes in $G_{t',t}$. Let $V_{t'}$ be the vertex set for graph $G_{t'}$ at time $t'$. Then, from Lemma \ref{lem:sizeofg}, we know that the number of nodes in the network at time $t'$ is $\Theta(n)$ whp. Also,  $t-t' = \Theta(\log n)$ from definition \ref{def:phase}, then from Lemma \ref{lem:arrival}, whp at most $\Theta(\log n)$ nodes depart or join from the network during this phase. Thus, whp,
    $$G_{t',t} = \Theta(n) - \Theta(\log n) .$$ 
    Recall from Definition \ref{def: core} that $\core = G_{t',t} - O(B_t)$. Since $G_{t'}$ is an expander graph, it has strong connectivity properties and high conductance, $\phi$, ensuring that removing a small fraction of nodes does not break expansion whp. Then, using the fact that $|B| = o(n)$ as the number of Byzantine nodes are small relative to the network size, we have 
    \begin{equation*}
        |\core| = \Theta(n) - \Theta(\log n) - o(n) = \Theta(n). \qedhere
    \end{equation*}   
\end{proof}

In the next lemma, we bound the probability that a random walk starting at any given node in the $\core$ ends at another node within the $\core$ after $\Theta (\log n )$ steps.

\begin{lemma}[\cite{soda25}]\label{lem: mean}
    Let $\mu = |B_t|/|\core|$. Then, 
    
    $$\frac12 |\core| \leq (1 -O(\mu))d \leq |E_\core| \leq \frac 12 d |\core|$$

    where $E_{\core}$ is the set of edges in core subgraph $\core$ in phase $P_{t',t}$.
\end{lemma}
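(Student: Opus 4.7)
\textbf{Proof Plan for Lemma \ref{lem: mean}.} The plan is to directly count edges by comparing the core subgraph $\core$ against the $d$-bounded graph $G_{t'}$ from which it is carved out, using the fact that only $O(|B_t|)$ nodes (together with their incident edges) are excluded to form the core.

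First, I would establish the upper bound $|E_\core| \leq d|\core|/2$. This is immediate: the P2P Construction Protocol guarantees that every honest node maintains a total degree of at most $9d$ (taking $d$ to be the relevant degree constant from the lemma's statement, i.e., the degree bound in $G$), so by the handshake lemma applied to the induced subgraph on $V_\core$ we get $|E_\core| \leq d|\core|/2$.

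Next, for the lower bound, I would start from $G_{t'}$ viewed as a $d$-regular (or nearly $d$-regular) expander on $|V_{t'}|$ vertices, which has roughly $d|V_{t'}|/2$ edges. The core $\core$ is obtained from $G_{t'}$ by deleting $B_{t'}$ and $D_{t',t}$, a set of total size $O(|B_t|) + O(\log n) = O(|B_t|)$ (the churn term is absorbed since $|B_t| = \omega(1)$ in the regime of interest). Each deleted vertex takes with it at most $d$ incident edges, so at most $O(d |B_t|)$ edges are removed. Hence
\[
|E_\core| \;\geq\; \tfrac{1}{2} d \, |\core| - O(d |B_t|) \;=\; \tfrac{1}{2} d \, |\core| \Bigl(1 - O(|B_t|/|\core|)\Bigr) \;=\; (1-O(\mu)) \tfrac{1}{2} d |\core|,
\]
which gives the middle inequality. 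Finally, since $\mu = |B_t|/|\core| = o(1/\log n)$ by the Byzantine tolerance assumption, $(1-O(\mu)) > 1/d$ for large enough $n$, and so $(1-O(\mu)) \tfrac{1}{2} d |\core| \geq |\core|/2$, giving the leftmost inequality.

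There is no substantive obstacle in this argument; the main thing to be careful about is the accounting of which edges are removed when passing from $G_{t'}$ to $\core$. Specifically, one must verify that the edges removed are only those incident to a deleted vertex, and that the small churn set $D_{t',t}$ (of size $O(\log n)$ by Lemma \ref{lem:arrival}) does not meaningfully inflate the loss beyond the $O(d |B_t|)$ bound, which follows from $|B_t| = \omega(\log n)$ in the relevant range (and if $|B_t| = O(\log n)$, the churn term can be folded into the $O(\mu)$ slack with $\mu$ redefined to include $|D_{t',t}|/|\core|$, matching the $\kappa$ used elsewhere in the paper).
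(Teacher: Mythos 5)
Your proposal is correct and follows essentially the same route as the paper's proof: the upper bound via the handshake lemma and the constant degree bound, and the lower bound by charging at most $d$ lost edges to each excluded (Byzantine or churned) vertex, with the $O(\log n)$ churn term absorbed into the $O(\mu)$ slack exactly as the paper does. Your reading of the middle term as $(1-O(\mu))\,d\,|\core|/2$ (rather than the literal $(1-O(\mu))d$ in the statement) matches what the paper actually proves, and your derivation of the leftmost inequality from $1-O(\mu) \geq 1/d$ is if anything cleaner than the paper's.
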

\begin{proof}
    Fix a phase $P_{t',t}$, where $t' \geq 3n $, and let $\core$ denote the core subgraph for this phase. From Definition \ref{def: core} and Lemma \ref{lem:sizeofcore}, we know that $|\core| = \Theta(n)$, and the core consists of all honest nodes that neither joined nor departed during the interval this phase and are not Byzantine. 

    Let $G_{t'}$ be the graph at time $t'$, and assume that it is a $d$-regular graph (as per Algorithm \ref{alg:p2pConstr}). Let $d_{max}$ be the maximum degree of any node, and $d_{avg}$ be the average degree. The algorithm ensures that the bounded degrees, so $d_{max} = O(1)$ and we assume $d_{avg} = \Theta(d_{max})$.

    First, we establish the upper bound for $|E_{\core}|$. Each node $v \in V(\core)$ has $\text{deg}_{C_{t',t}}(v) \leq d_{max}$. The sum of degrees in $\core$ is $2|E_{\core}|$. Thus, $2|E_{\core}| \leq |\core|d_{max}$. Using $d_{max}=O(d)$, we have:
    
    $$ |E_{\core}| \leq \frac{1}{2}d|\core| $$

    Next, we establish the lower bound for $|E_{\core}|$. Assuming the nodes of the $\core$ have $d_{avg}$ degree, then the number of edges in $\core$ would be $\frac 12 d_{avg} |\core|$. But some these edges might have been affected by Byzantine nodes or some of the nodes might have departed due to the churn during this phase. Therefore, number of such unavailable nodes is $|B_t| + |D_{t',t}|$. This means some of the node's edges would be outside the $\core$ and each such node could have been connected to $d_{max}$ nodes. Each such node could have been connected to at most $d_{max}$ nodes within $V(C_{t',t})$. Thus, the maximum number of edges ``removed'' or not formed with respect to $V(C_{t',t})$ is bounded by $d_{max}(|B_t| + |D_{t',t}|)$. 

    From Lemma \ref{lem:arrival}, $|D_{t',t}| = O(\log n)$. Substituting $d_{avg} = \Theta(d)$ and $d_{max} = O(d)$:
    $$|E_{C_{t',t}}| \geq \frac{1}{2}d|C_{t',t}| - O(d(|B_t| + \log n)).$$Factoring $\frac{1}{2}d|C_{t',t}|$:$$|E_{C_{t',t}}| \geq \frac{1}{2}d|C_{t',t}| \left(1 - O\left(\frac{|B_t|}{|C_{t',t}|} + \frac{\log n}{|C_{t',t}|}\right)\right).$$

    With $\mu = |B_t|/|C_{t',t}|$ and $|C_{t',t}| = \Theta(n)$, the term $\frac{\log n}{|C_{t',t}|} = O\left(\frac{\log n}{n}\right)$. Given that $|B_t|$ is $\omega(\log n)$ and  $O\left(\frac{\log n}{n}\right)$ is $O(\mu)$. Thus, we derive the lower bound:
    $$|E_{C_{t',t}}| \geq \frac{1}{2}d|C_{t',t}|(1 - O(\mu)).$$

    As lemma the states $(1 - O(\mu))d \le |E_{C_{t',t}}|$. Comparing this with the derived lower bound $|E_{C_{t',t}}| \ge \frac{1}{2}d|C_{t',t}|(1 - O(\mu))$, the lemma's inequality $(1 - O(\mu))d \le |E_{C_{t',t}}|$ holds if $(1 - O(\mu))d \le \frac{1}{2}d|C_{t',t}|(1 - O(\mu))$. Assuming $d > 0$ and $(1-O(\mu)) > 0$, this simplifies to $1 \le \frac{1}{2}|C_{t',t}|$, or $|C_{t',t}| \ge 2$, which is true for any non-trivial core.

    Combining the derived bounds on $|E_{C_{t',t}}|$ yields the full statement of the lemma.
    
\end{proof}

\begin{lemma}[\cite{soda25}]\label{lem: corewalk}
    Fix a phase $P_{t',t}$. Consider, for any core node $v \in \core$, any random walk that starts at $v$ and walks only through the nodes of $\core$ for $\Theta(\log n)$ steps. Then, the probability that the walk is at node $u \in \core$ at the end of the random walk is $\Theta \left(\frac{1}{|\core|} \right) = \Theta \left(\frac1n \right)$.
\end{lemma}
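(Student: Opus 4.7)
The plan is to reduce the problem to a standard analysis of the mixing time of a simple random walk on the (static) core subgraph $\core$, and then read off the endpoint probability from the stationary distribution. Specifically, I would proceed in the following steps.

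First, I would argue that a random walk that by hypothesis walks \emph{only} through nodes of $\core$ can equivalently be viewed as a simple random walk on the static graph $\core$ itself. The key observation is that, conditioned on the event that each step of the walk uses an edge whose other endpoint lies in $\core$, the next vertex chosen at any node $w$ is distributed uniformly over $w$'s neighbors in $\core$ (i.e., among the edges of $\core$ incident to $w$). This is exactly the distribution of one step of a simple random walk on the graph $\core$. Since $\core$ is fixed throughout the phase $P_{t',t}$ by Definition \ref{def: core}, this reduction is to a walk on a static graph.

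Second, I would invoke the assumption that $\core$ is an expander with constant conductance $\phi_\core$, so that its mixing time is $\tau_\core = b \log n$ for a suitably large constant $b$ (as discussed in Section \ref{subsec:defCore}). Since the walk runs for $\Theta(\log n)$ steps, which by the choice of constants is at least $\tau_\core$, the $\ell_\infty$-distance between the distribution of the walk's position and the stationary distribution $\pi_\core$ on $\core$ is at most $1/n^3$. In particular, for any $u \in \core$, the probability of ending at $u$ is $\pi_\core(u) \pm 1/n^3$, where $\pi_\core(u) = \deg_\core(u)/(2|E_\core|)$.

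Third, I would plug in the bounds from Lemma \ref{lem: mean} and the bounded-degree property of the P2P construction. Each node in $\core$ has degree between $1$ and the constant bound enforced by Algorithm \ref{alg:p2pConstr}, so $\deg_\core(u) = \Theta(1)$. Moreover, Lemma \ref{lem: mean} gives $|E_\core| = \Theta(d|\core|) = \Theta(|\core|)$, and Lemma \ref{lem:sizeofcore} yields $|\core| = \Theta(n)$. Combining these, $\pi_\core(u) = \Theta(1/|\core|) = \Theta(1/n)$, and the additive error $1/n^3$ is negligible compared to this, giving the claimed $\Theta(1/|\core|) = \Theta(1/n)$ bound.

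The main obstacle I anticipate is being careful about the conditioning step: one must verify that the ``conditioned on walking only in $\core$'' hypothesis in the lemma statement really does produce the simple random walk on $\core$ rather than some biased variant (for instance, if one conditioned on an event that depended on the future, the one-step transitions could be distorted). I would resolve this by noting that the conditioning used here is on an event determined step-by-step (each chosen neighbor happens to lie in $\core$), which at each step reweights the transition only among $\core$-neighbors uniformly, matching the simple random walk on $\core$ exactly. Once this is established, the rest is a standard mixing-time argument on an expander.
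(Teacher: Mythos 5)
Your proposal is correct and follows essentially the same route as the paper's proof: condition on the walk staying in $\core$ to reduce to a simple random walk on the static core, use the constant conductance to mix within $\tau_\core = b\log n$ steps to within $1/n^3$ of the stationary distribution $\deg_\core(u)/(2|E_\core|)$, and then apply Lemma \ref{lem: mean} together with $|\core| = \Theta(n)$ to conclude $\Theta(1/n)$. The only (minor) addition in the paper is an explicit appeal to Lemma \ref{lem: coremixing} to guarantee the walk actually completes at least $\tau_\core$ random-walk steps despite buffering, which your hypothesis-level "$\Theta(\log n)$ steps" handles implicitly.
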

\begin{proof}
    Consider a random walk that starts at a node $v \in \core$ and walks only on nodes in $\core$. Conditioning that the walk only uses edges in $\core$, then it holds that any node in $\core$, the walk chose a \textit{uniform random outgoing edge among edges in } $\core$. Hence, the conditioned random walk is a standard random walk on $\core$. Since $\core$ is an expander, or more precisely $\core $ has constant conductance $\phi_{\core}$, the (conditioned) random walk on $\core$ mixes in $\tau_\core$ steps and reaches close to the \textit{stationary distribution} in $\core$ (up to $1/n^3$ error, as we have defined in Subsection \ref{subsec:defCore}). In particular, the stationary probability of node $u \in \core$ is $deg_{\core}(u/2|E_\core|)$. Now, we can apply Lemma \ref{lem: coremixing} where $\core$ is the honest subset of nodes and $f = \tau_\core = b\log n$ is the mixing time of $\core$. As a result, the probability that the walk ends up at a node $u \in \core$ is proportional to its degree $deg_\core(u)$ (where $1 \leq deg_\core(u) \leq d)$ divided by the number of edges in $\core$ (and up to $1/n^3$ error)  which is $\Theta(d|C|/2) = \Theta(n)$ (by Lemma \ref{lem: mean}, and as $|B_t| - o(n)$ and $|\core| = n - O(|B_t|) = \Theta(n)$. Hence, the probability that the walk ends at $u$ is $\Theta(1/n)$.

\end{proof}

% \iffalse
\begin{lemma}[Verified Random Walk in $\core$]\label{lem: verifiedrw}
    Fix a phase $P_{t',t}$, where $t' \geq 3n$. Consider for any node $v$ in $\core$ if a random walk that started at $v$ and walked only through the nodes of $\core$ for $\Theta( \log n)$ time steps, and ended at some node $u$ in $\core$. Then, the probability that the verified token is successfully received back at $v$ successfully is at least $1 - o(1)$.
\end{lemma}

\begin{proof}
    Fix a phase $P_{t',t}$, and consider a random walk starting from node $v$ in $\core$. The walk proceeds through nodes $x_1, x_2, \ldots x_k$ where $k = \Theta (\log n)$, and stays entirely within $\core$. 

    Recall that $t' - t = \Theta (\log n)$ from Definition \ref{def:phase} and from Lemma \ref{lem:arrival}, the expected number of departures in the entire graph during this phase, $P_{t',t}$ is $\Theta(\log n)$.  The verified stage fails when even one of the oldest nodes is one of the nodes in the random walk. Fix a node $x_i$ on the random walk. Then, the probability that $x_i$ belongs to the set of oldest $\Theta(\log{n})$ nodes is 
    $$ \frac{\Theta (\log n)}{|\core|} = O\left( \frac{\log n}{n} \right)$$
    where the above follows from Lemma \ref{lem:sizeofcore} that $|\core| = \Theta(n)$. Since the random walk has length $k = \Theta(\log n)$, the probability at none of the nodes in $x_1,...,x_k$ depart is
    \begin{equation*}
        \left(1- \frac{\log n}{n} \right)^{\log n} = 1-O\left(\frac{\log^2n}{n}\right) \approx 1 - o\left(1 \right) \qedhere
    \end{equation*}
\end{proof}

\begin{lemma} \label{lem: kappa}
    Fix a phase $P_{t',t}$. Let $\kappa = (|B| \log n) / |\core|$, and $R(\core)$ denote the total number of tokens initiated by the nodes in the core $\core$ during this phase. Recall that each (honest) node initiates a maximum of $\capa$ (good) tokens in each phase of Algorithm \ref{alg:byzantineSamplingSparse}. Then, at most $O( \kappa |\core| \capa)$ tokens enter or leave $\core$, and at least $R(\core) - O(\kappa |\core| \capa)$ tokens walk only in $\core$ (i.e., are good). Moreover, these good tokens walk at least $\tau_\core$ steps whp.
\end{lemma}
\begin{proof}
    Fix a phase $P_{t',t}$. Recall from definition \ref{def:phase} that $t-t'= \Theta(\log{n})$. Thus, from Lemma \ref{lem:arrival}, the expected number of nodes that depart during this phase is $\Theta(\log n)$. Since the graph remains an expander throughout the phase, we consider the cut between the core $\core$ and the rest of the network. 

    From the definition of $\kappa$, we know that the fraction of Byzantine nodes in $C_{t',t}$ is at most $\kappa$. Also,  since we assume that at any time $t$, $G$ is a $d$-regular expander, thus the total number of edges between the core and the rest of the network is $O(|B|)$. So, the number of tokens that can cross this boundary in one step is $O(|B| \log^3 n)$, as each honest node forwards at most $\capa = O(\log^3 n)$ tokens. 

    Thus, the total number of tokens that enter or leave the core is $O(\kappa \cdot \core \cdot \capa )$ and the total number of tokens remaining inside is $R(C) - O(\kappa \cdot \core \cdot \capa).$ These tokens are unaffected by Byzantine nodes and complete their random walks unbiased. 

    As the core remains an expander throughout the phase, the mixing time $\tau_{\core}$, ensures the token movement follows the Algorithm \ref{alg:byzantineSamplingSparse} and they complete at least $\tau_\core$ steps before termination whp. Thus, whp, the tokens that remain in $\core$ mix properly.
\end{proof}

\iffalse
\begin{lemma}
    At any time $t>3n$, the fraction of honest nodes in the verified list is at least $1-\frac{\kappa\cdot \tot \dot |\core|}{R(\core)}$.
\end{lemma}
\begin{proof}
    The byzantine random walk is used to sample nodes for the verified list. since the number of tokens that end at an honest node during the byzantine random walk is $R(\core) - \kappa\cdot \tot \dot |\core|$ from Theorem \ref{thm:ByzSamplingFinal}(2), thus the fraction of honest nodes in the verified list is at least $1-\frac{\kappa\cdot \tot \dot |\core|}{R(\core)}$. We know $|\core| = \Theta(n) - O(|B|) = \Theta(n) - O(\frac{n}{\log n})$.
\end{proof}
\fi

From Lemma \ref{lem: coremixing}, \ref{lem:sizeofcore}, \ref{lem: corewalk}, and \ref{lem: kappa} together imply the Theorem \ref{thm:ByzSamplingFinal}.

\end{document}